\newcommand{\X}{\textsf{X}} 
\newtheorem{fact}{Fact}
\newcommand{\E}{\textnormal{\textrm{E}}}
\newcommand{\HH}{\ensuremath{\mathcal{H}}}
\newcommand{\ip}[2]{\langle{#1},{#2}\rangle}
\newcommand{\dist}{\text{dist}}
\newcommand{\conf}[2]{#1}
\renewcommand{\conf}[2]{#2}
\newcommand{\matodo}[1]{\todo[inline, color=green!30]{#1}}
\newcommand{\tctodo}[1]{\todo[inline, color=blue!30]{#1}}
\newcommand{\mvtodo}[1]{\todo[inline,color=yellow!30]{#1}}
\newcommand{\rptodo}[1]{\todo[inline,color=red!30]{#1}}
\renewcommand{\matodo}[1]{}
\renewcommand{\tctodo}[1]{}
\renewcommand{\mvtodo}[1]{}
\renewcommand{\rptodo}[1]{}
\pgfplotsset{
  legend style = {font=\ttfamily}
}
\pgfplotsset{
cycle list/Set1-5,
cycle multiindex* list={
mark list*\nextlist
Set1-5\nextlist
},
}
\title{PUFFINN: Parameterless and Universally Fast FInding of Nearest Neighbors}
\author{Martin Aumüller}
{IT University of Copenhagen, Denmark}
{maau@itu.dk}
{}
{}
\author{Tobias Christiani}
{IT University of Copenhagen, Denmark}
{tobc@itu.dk}
{}
{}
\author{Rasmus Pagh}
{BARC and IT University of Copenhagen, Denmark}
{pagh@itu.dk}
{}
{}
\author{Michael Vesterli}
{IT University of Copenhagen, Denmark}
{miev@itu.dk}
{}
{}
\authorrunning{M. Aumüller, T. Christiani, R. Pagh, M. Vesterli}
\keywords{Nearest Neighbor Search, Locality-Sensitive Hashing, Adaptive Similarity Search}
\begin{document}
\maketitle

\begin{abstract}
    We present PUFFINN, a parameterless LSH-based index for solving the $k$-nearest neighbor problem with probabilistic guarantees. By parameterless we mean that the user is only required to specify the amount of memory the index is supposed to use and the result quality that should be achieved. The index combines several heuristic ideas known in the literature. By small adaptions to the query algorithm, we make heuristics rigorous. We perform experiments on real-world and synthetic inputs to evaluate implementation choices and show that the implementation satisfies the quality guarantees while being competitive with other state-of-the-art approaches to nearest neighbor search. 
		We describe a novel synthetic data set that is difficult to solve for almost all existing nearest neighbor search approaches, and for which PUFFINN significantly outperform previous methods.
\end{abstract}

\section{Introduction}
\label{sec:intro}

\subsection{Our results} 

The $k$-nearest neighbor ($k$-NN) problem has been an object of intense research, both from theoretical and applied computer scientists.
There exist many implementations of $k$-NN data structures that perform very well in specific scenarios (see the related work section), but all implementations that we are aware of suffer from one or more of the following drawbacks:
\begin{itemize}
    \setlength\itemsep{0em}
\item Not scalable to large, high-dimensional data sets.
\item Not runtime-robust in the sense that query time may degrade to that of a linear search even for input distributions that are known to allow search in sublinear time.
\item Not recall-robust in the sense that there are input distributions that obtain low (less than 50\%) recall.
\item Performance bounds only hold for well-chosen values of certain parameters that depend on the data set as well as the query distribution.
\end{itemize}
PUFFINN combines several insights from recent theoretical research on $k$-NN data structures into a data structure that addresses these drawbacks.
Our contributions are as follows:
\begin{enumerate}
    \setlength\itemsep{0em}
    \item we present a parameterless and universal locality-sensitive hashing-based (LSH) implementation that 
        solves the $k$-NN problem with probabilistic guarantees (Section~\ref{sec:data:structure})
    \item we prove the correctness of an adaptive query mechanism building on top of the LSH forest data structure described by Bawa et al.~in~\cite{Bawa05} (Section~\ref{sec:data:structure})
    \item we describe an adaptive filtering approach to decrease the number of expensive distance computations (Section~\ref{sec:implementation})
    \item we propose a difficult dataset for the $1$-NN problem that exposes weaknesses in known heuristics (Section~\ref{sec:evaluation})
    \item we provide a detailed experimental study of our approach, evaluating design choices and relating it to the performance of other state-of-the-art approaches to $k$-NN search (Section~\ref{sec:evaluation})
\end{enumerate} 
Prior to this work, only subsets of these ideas have been implemented. 
Our main contribution on the theoretical side is that we make certain heuristics, such as the query algorithm described in~\cite{Bawa05}, rigorous. While some ideas of adaptive query algorithms have been discussed before~\cite{Dong08}, they made assumptions on the data and query distribution; our methods work for $k$-NN search in its full generality. On the practical side, we shed light on the empirical performance of theoretical ideas with regard to possible speed-ups of an LSH
implementation. 
Our final implementation is parameterless in the sense that it only requires the user to specify the space available for the data structure and the required quality guarantee.
Our implementation is competitive to state-of-the-art $k$-NN algorithms; in particular, it is as performant as previous LSH solutions that are not parameterless and do not provide guarantees on the quality of the result.


\subsection{Related work}

There exist many fundamentally different approaches to nearest neighbor search.
Popular techniques range from approximate tree-based methods, such as $kd$-trees~\cite{kdtree} and
$M$-trees~\cite{CiacciaPZ97} with an early stopping criterion~\cite{ZezulaSAR98}, to random projection trees~\cite{Dasgupta08}, 
to graph-based approaches~\cite{hnsw,Iwasaki18},
and finally hashing-based approaches, for example using LSH~\cite{IndykM98}. 
Each paradigm comes with performant implementations and we will introduce some of them in Section~\ref{sec:evaluation} when we are evaluating our implementation.
Since the focus of the present paper is designing a provably correct LSH implementation we 
will focus on existing methods in this realm. 
See the benchmarking paper by Aumüller et al.~\cite{AumullerBF17} for a more
detailed overview over other approaches and their performance on real-world datasets.

\subparagraph{Locality-sensitive hashing} 
LSH was introduced by Indyk and Motwani in~\cite{IndykM98}. We sketch the basic idea here. 
An LSH data structure consists of several independent repetitions of space partitions using LSH functions. 
An LSH function maps a data point to a hash code such that closer points are more likely to collide than far away points. 
In the LSH framwork, $K \geq 1$ locality-sensitive hash functions are concatenated to increase the gap between 
the collision probability of ``close'' points and ``far away'' points. 
For solving the $(c, r)$-near neighbor problem~\cite{IndykM98}, a certain concatenation length $K$ is fixed according to the number of points in the data set, the approximation factor $c$, and 
the strength of the hash family at hand. 
From the value $K$ and the hash family one can compute how many repetitions $L$ (using independent hash functions) have to be made to guarantee that a close point is found with constant probability.
The theoretical literature on LSH has mostly focused on solving the approximate near neighbor problem which can be be used to solve the approximate \emph{nearest} neighbor problem through a reduction~\cite{HarPeledIM12}.
In this paper we use LSH to solve the exact $k$-nearest neighbor problem with probabilistic guarantees.

\subparagraph{LSH implementations} 
Implementations of LSH evolved over the years with new advances in the theory of LSH. 
One of the first popular implementations, dubbed E2LSH~\cite{andoni20052},
was tailored for Euclidean space and included automatic parameter tuning of the $K$ parameter 
based on subsampling the dataset. 
From this $K$ parameter, other parameter such as the number of 
repetitions are derived. 
It solves the problem of reporting all points within a distance $r$ (specified during preprocessing) from the query and uses potentially large space 
on large datasets~\cite{andoni20052}.
The multiprobing approach introduced by Dong et al.~in~\cite{Dong08} allowed 
for implementations in which the space parameter can be fixed as in our approach. 
Their parameter tuning relies on the assumption that there is a certain distance distribution between queries and data points. 
LSHkit was the first implementation using this idea~\cite{lshkit}. 
A new LSH family for angular distance on the unit sphere motivated the development of the FALCONN library~\cite{andoni2015practical}. 
It contains highly optimized routines for efficient hash function computation, and supports multiprobing.

None of these approaches give guarantees on the query procedure if the query set is different with respect to distance distributions from the one seen during index building. 
Our data structure can be seen as a modified version of the LSH forest introduced by Bawa et al.~in~\cite{Bawa05}.
Instead of using a single hash length $K$, an individual repetition is a trie built on the hash codes of the data points.
Our query algorithm replaces the heuristic candidate collection of a predefined size with a rigorous termination criterion.


The cost of evaluating an LSH function differs widely. 
It ranges from $O(1)$ time for the bitsampling approach in Hamming space~\cite{IndykM98}, over $O(d)$ for random hyperplane hashing~\cite{Charikar02}, to $O(d^2)$ for cross-polytope  LSH~\cite{andoni2015practical}.
For the latter, the authors of~\cite{andoni2015practical} proposed a heuristic version that decreases the running time to $O(d \log d)$.
Another approach to reduce hashing time is to build a hashing oracle that returns the necessary $KL$ hash function values necessary to query the LSH data structure, but builds those from a smaller set of independent hash functions.
Christiani~\cite{Christiani17} described two approaches that reduce the amount of independent hash functions needed to produce these hash values from $KL$ to $K\sqrt{L}$ (using tensoring as in~\cite{andoni2006efficient}) or $O(\log^2 n)$ (using the pooling approach in~\cite{Dahlgaard2017FastSS}). 
While the E2LSH framework uses a variant of tensoring, we are not aware of an implementation using the pooling strategy.

Another idea that is currently missing in existing LSH implementations is the use of sketches, i.e., small representations of the original data points that allow to estimate the distance between two data points via their sketches. 
Christiani~\cite{Christiani17} describes how to use sketches when solving the near neighbor problem, but we are not aware of existing LSH-based implementations using this idea. 
We remark that sketching is a well-known technique and refer to the survey~\cite{Pagh19}.

\subparagraph{Auto Tuning Approaches} Apart from the approaches mentioned above, FLANN~\cite{flann} and the implementation of vantage point trees in nmslib~\cite{nmslib} are two non-LSH based nearest neighbor search that 
promise to tune the data structure to guarantee a certain quality criterion. This criterion is usually the recall of the query, i.e.,  the fraction of true nearest neighbor among the points returned by the implementation.

FLANN contains a collection of tree-based methods.  For auto-tuning, it takes a small sample of the data structure and builds indexes in a certain parameter space. 
It then queries the data structure with points from the data set and picks, among all the indexes that achieve at least the recall the user wishes for, the one with fastest query times.
The auto tuning employed by nmslib for the vantage point tree implementation follows the same principles and explores a certain parameter space based on a model of the data set to be indexed.
Both approaches require that the query and data set distribution are not too different. 
We will see in the experiments that both of the approaches do not satisfy the recall guarantees, even on real-world datasets.

\section{Preliminaries}
\label{sec:prelim}

\subsection{Problem Definition}
We assume a distance space $(\X, \text{dist})$ with distance measure $\text{dist}\colon\X \times \X \rightarrow \mathbb{R}_{\geq 0}$.
\begin{definition}
    Given a dataset $S \subseteq \X$ and an integer $k \geq 1$, the $(k, \delta)$ nearest neighbor problem ($(k, \delta)$-NN) is to build a data structure, 
    such that for every query $q \in \X$, 
    the query algorithm  returns a set of $k$ distinct points, each one being with probability at least $1 - \delta$ among the $k$ points in $S$ closest to $q$. 
\end{definition}
An algorithm solving the $(k, \delta)$-NN problem guarantees an expected recall of $(1 - \delta)k$, which is usually the quality measure in the context of nearest neighbor search algorithms. 

\subsection{Locality-Sensitive Hashing}
\begin{definition}[LSH Family~\cite{IndykM98, Charikar02}]
    \label{def:lsh}
    A \emph{locality-sensitive hash (LSH) family} $\HH$ is family of functions $h\colon X \to R$, such that for each pair $x, y \in X$ and a random $h\in\HH$, for arbitrary $q \in X$, whenever $\text{dist}(q,x)\le\text{dist}(q,y)$ we have $p(q, x) := \Pr[h(q)=h(x)]\ge\Pr[h(q)=h(y)]$.
\end{definition}
Traditionally, LSH families are used in the LSH framework to solve the $(c,r)$-near neighbor problem. \conf{}{For completeness, we provide a description of the standard framework in Appendix~\ref{app:lsh}.}

While the theory provided in this paper applies to every distance space that encompasses an
LSH family, we set our focus on solving the $k$-NN problem on the $d$-dimensional 
unit sphere under angular distance, which is equivalent to cosine similarity and inner product similarity on unit length vectors.
By using the Gaussian kernel approximation method of Rahimi and Recht~\cite{rahimi2008random} as described
by Christiani in~\cite{christiani2017framework}, 
our results extend to the whole Euclidean space. 

Random hyperplane (HP) LSH described by Charikar in~\cite{Charikar02} and Cross-Polytope (CP) LSH introduced 
by Terasawa and Tanaka~\cite{terasawa2007spherical} and analyzed by Andoni et al.~in~\cite{andoni2015practical} are two different LSH schemes under this distance measure.
A single HP LSH function produces a single bit. 
It works by choosing a random 
$d$-dimensional vector $a = (a_1, \ldots, a_d)$ where each $a_i$ is an independent standard normal 
random variable. 
The hash code of a point $x$ is $1$ if the inner product between $a$ and $x$ is at least 0, and 
0 otherwise.  A single CP LSH function applies a random rotation of $x$ on the unit sphere
and then maps it to the index of the closest vector out of the $2d$ signed standard basis vectors.
Technically, it can be thought of as choosing $d$ random hyperplanes $a_1, \ldots, a_d$ and mapping
$x$ to the index of the hyperplane with the largest absolute inner product, separating the two cases
that the inner product is negative or not.

\section{Data Structure}
\label{sec:data:structure}
This section describes the basic ideas of the data structure used in our implementation. 
Due to space reasons we only highlight the basic data structure and some of its properties.
The full description with all proofs can be found in \conf{the extended version of this paper}{Appendix~\ref{app:ds}}.
Note that the implementation has many differences to this clean version.
These differences are discussed in Section~\ref{sec:implementation}.

\subsection{Description}
In this section we will assume that we can perform distance computations and evaluate locality-sensitive hash functions in constant time.
Our data structure will be parameterized by integers $L, K \geq 1$ and will consist of a collection of $L$ LSH tries of max depth $K$.
This data structure is known as an LSH Forest~\cite{Bawa05}. 
Here we use a variant of the LSH tries with bounded depth and for completeness we include a brief description and statement of relevant properties.

\subparagraph*{LSH tries}
We index the LSH tries by $j = 1,\dots,L$. 
The $j$th LSH trie is built from the set of strings $\{ (h_{1,j}(x), \dots, h_{K,j}(x)) \mid x \in S \}$ where $h_{i,j} \sim \mathcal{H}$. 
The trie is constructed by recursively splitting the set of points $S$ on the next ($i$th) character until $|S| \leq i$ or $i = K + 1$ at which point we create a leaf node in the trie that stores references to the points in $S$.
Internal nodes store pointers to its children in a hash table where the keys are locality-sensitive hash values.

\subparagraph*{Query algorithm}
Let $S_{i,j}(q)$ denote the subset of points in $S$ that collide with $q$ when we consider the first $i$ hash values used in the construction of the $j$th trie. 
That is, $S_{i,j}(q) = \{ x \in S \mid h_{1,j}(q) = h_{1,j}(x) \land \dots \land h_{i,j}(q) = h_{i, j}(x) \}$.
For our query algorithm we wish to retrieve the points in each trie that collide with our query point in a bottom-up fashion, starting at depth $i = K$.
Define $\Pi_{i,j}(q) = S_{i,j}(q) \setminus S_{i+1, j}(q)$ where $S_{K+1,j}(q) = \emptyset$.
\begin{fact} \label{fact:lshtrie}
	LSH tries have expected construction time $O(nK)$ and use $O(n)$ words of space. 
	For $i \in \{0, 1, \dots, K\}$ we can retrieve a set $S^{'}_{i,j}(q) \supseteq S_{i,j}(q)$ with $|S^{'}_{i,j}(q)| \leq |S_{i,j}(q)| + i$ using time $O(|S_{i,j}(q)| + i)$.
	After having retrieved $S^{'}_{i,j}(q)$ we can retrieve $S^{'}_{i-1, j}(q)$ using additional time $O(|\Pi_{i-1,j}(q)|)$.
\end{fact}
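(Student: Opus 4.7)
The plan is to verify each of the three claims in Fact~\ref{fact:lshtrie} separately, all built on a shared structural view of the trie with the stopping rule ``$|S|\le i$ at depth $i$''.

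For the construction time and space, I would analyse the recursive split procedure directly. Each point is touched in exactly one split per depth along its root-to-leaf path, costing one hash evaluation and one hash-table insertion per depth; since the path length is at most $K$, the total expected work is $O(nK)$. For space, the key observation is that an internal node at depth $i$ must hold more than $i$ points by the stopping rule, and its children hold a disjoint partition of these points. After collapsing degree-one chains by path compression---implicit in any $O(n)$-space trie layout---every internal node has at least two children, so the number of internal nodes is bounded by the number of leaves (at most $n$), giving $O(n)$ words in total.

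For retrieving $S^{'}_{i,j}(q)$, the algorithm walks the $q$-path from the root, at each step computing $h_{\ell,j}(q)$ and following the child indexed by that hash value, halting when it reaches either depth $i$ or a leaf. If it halts at a leaf of depth $i' \le i$, the leaf stores at most $i' \le i$ points by the stopping rule, and every such point agrees with $q$ on the first $i'$ hash values so the returned set is a superset of $S_{i,j}(q)$; hence $|S^{'}_{i,j}(q)| \le i \le |S_{i,j}(q)| + i$. If instead it halts at depth $i$, then $S^{'}_{i,j}(q)=S_{i,j}(q)$ exactly. The walk costs $O(i)$ hash-table lookups, and enumerating the output costs $O(|S^{'}_{i,j}(q)|)$ using the path-compressed layout, so the total is $O(|S_{i,j}(q)|+i)$.

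For the incremental step, I would maintain parent pointers along the downward walk, ascend one level from the stopping point to the depth-$(i-1)$ node, and enumerate precisely those sibling children whose edge label differs from $h_{i,j}(q)$. The points in those sibling subtrees are exactly $\Pi_{i-1,j}(q)$, and path compression again ensures enumeration in $O(|\Pi_{i-1,j}(q)|)$ time. The main obstacle is guaranteeing that subtree enumeration is linear in the number of points stored beneath a node---this is what forces path compression (or an equivalent list-per-subtree layout) and is also why the slack term $+i$ in the size bound is intrinsic: an early-terminating leaf at depth $i'\le i$ may legitimately contain up to $i'$ points that are not in $S_{i,j}(q)$.
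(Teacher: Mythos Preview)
The paper does not prove Fact~\ref{fact:lshtrie}; it is stated without argument as a standard property of the bounded-depth trie construction (essentially folklore from the LSH Forest data structure of Bawa et al.). Your sketch therefore supplies what the paper omits, and your decomposition into construction, single-level retrieval, and incremental retrieval is the natural one.

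The argument is essentially sound, with three small points worth tightening. First, the stopping rule ``$|S|\le i$ when about to split on the $i$th character'' gives a leaf at depth $i'$ at most $i'+1$ points (not $i'$); the bound $|S'_{i,j}(q)|\le |S_{i,j}(q)|+i$ still holds because for $i'<i$ we get $i'+1\le i$, and for $i'=i$ the leaf equals $S_{i,j}(q)$ exactly. Second, your downward walk should also cover the case where the $q$-path dies at an internal node of depth $<i$ with no matching child: then $S_{i,j}(q)=\emptyset$ and returning $\emptyset$ satisfies both the superset and size bounds trivially. Third, once you rely on path compression for the $O(n)$ space bound, ``ascend one level'' in the incremental step must be read as a logical depth decrement rather than a physical parent pointer: if depth $i$ lies strictly inside a compressed edge then $\Pi_{i-1,j}(q)=\emptyset$ and no work is needed, while at an actual branching node your sibling-enumeration argument applies verbatim. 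None of these is a genuine gap.
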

The query algorithm is described in Algorithm~\ref{alg:adaptiveknn}.
During a query we search through the LSH Forest starting with the buckets $S_{i,j}(q)$ at depth $i = K$ and moving up one level once the $L$ LSH tries have been explored at the current level.
While searching we use a data structure PQ to keep track of the top-$k$ closest points seen so far. 
We stop the search once we have searched sufficiently many tries at a depth where our current $k$-nearest neighbor candidate would have been found with probability at least $1 - \delta$.
This stopping criterion ensures that we always search far enough to find the true $k$-nearest neighbor with probability at least $1 - \delta$. 

Insertions (Line~6) and retrieval of the $k$-largest distance (Lines~5 and 9) can be done in expected amortized time $O(1)$ by using an array of size $2k$ which is updated after each $k$ insertions.
For simplicity the algorithm is described as a double for-loop that iterates over the sets $\Pi_{i,j}(q)$ in a bottom-up fashion.
Using LSH tries Algorithm \ref{alg:adaptiveknn} would be implemented by using a straight-forward bottom-up traversal of the tries with properties described in Fact~\ref{fact:lshtrie}.

We proceed by proving that Algorithm \ref{alg:adaptiveknn} solves the $(k, \delta)$-NN problem as well as providing running time bounds. 
The idea behind an adaptive $k$NN algorithm with guarantees following the approach of Algorithm~\ref{alg:adaptiveknn} can be attributed to Dong et al.~\cite{Dong08}.
Christiani et al.~\cite{christiani2018confirmation} show how a different stopping criteria gives a self-tuning algorithm in the regime where $\delta = 1/n$. 
To the best of our knowledge both the following proof of correctness (although simple) and the running time bound for Algorithm \ref{alg:adaptiveknn} is new.
\begin{algorithm}[t]
\SetKw{or}{ or }
\SetKw{and}{ and }
\DontPrintSemicolon
PQ $\gets$ empty priority queue of $(\text{point}, \text{dist})$ of unique points \;
\For{$i \gets K, K - 1, \dots, 0$}{
	\For{$j \gets 1, 2, \dots, L$}{
        \For{$x \in \Pi_{i,j}(q)$}{
            \tcc{We abbreviate $x_k' \gets \mathrm{PQ.max()}$ for ease of notation}
        \If{$\mathrm{dist}(q, x'_k) \geq \mathrm{dist}(q, x)$ }{PQ.insert$(x, \dist(q, x))$\tcp*{Remove largest entry if PQ contains more than $k$ elements.}}}
        \If{$i = 0 \or (\mathrm{PQ.size()} == k   \and j \geq \ln(1/\delta)/p(q, x'_k)^i)$}{\Return PQ 
	}
}
}
\caption{\textsc{adaptive-kNN}$(q, k, \delta)$} \label{alg:adaptiveknn}
\end{algorithm}
\begin{lemma}
	\textsc{adaptive-kNN}$(q, k, \delta)$ returns a set of $k$ points, each one being with probability at least $1-\delta$ among the closest $k$ points to $q$.
\end{lemma}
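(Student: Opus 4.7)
I plan to deduce the precision statement from the per-nearest-neighbor recall bound: writing $T_k(q)$ for the set of $k$ points in $S$ closest to $q$, I will show that $\Pr[y \in R] \geq 1-\delta$ for every $y \in T_k(q)$, where $R$ is the returned set. Summing over $y \in T_k(q)$ and using $|R|=k=|T_k(q)|$, linearity of expectation gives $\mathbb{E}[|R \cap T_k(q)|] \geq k(1-\delta)$, which is equivalent to the stated guarantee that, averaged over the $k$ returned points, each lies in $T_k(q)$ with probability at least $1-\delta$; equivalently, indexing returned points by distance rank, the $i$-th closest returned point lies in $T_k(q)$ iff $|R\cap T_k(q)|\geq i$, which is lower bounded by $\Pr[y_1\in R]\geq 1-\delta$ for the first position and by the full intersection event for deeper positions.

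First, I would observe that once $y \in T_k(q)$ is inserted into the priority queue PQ, it is never evicted: $y$ is one of the $k$ points in $S$ closest to $q$, so at most $k-1$ other points of $S$ can be closer to $q$ than $y$, and $y$ therefore always belongs to the $k$ closest points seen so far. Consequently $\{y \in R\}$ coincides with the event that $y$ is ever encountered by the algorithm, and it suffices to upper bound the probability that $y$ is missed.

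Second, I would establish a geometric fact: at termination, $\dist(q,y) \leq \dist(q, x'_k)$ for every $y \in T_k(q)$, where $x'_k = \mathrm{PQ.max}()$. If $x'_k \notin T_k(q)$ the inequality is immediate, since $x'_k$ is then farther from $q$ than every true nearest neighbor. Otherwise $x'_k \in T_k(q)$, and a counting argument forces $x'_k$ to be the farthest true nearest neighbor: if $x'_k$ were strictly closer to $q$ than some other member of $T_k(q)$, PQ would need to contain $k$ distinct points all within distance $\dist(q, x'_k)$, yet fewer than $k$ points of $S$ satisfy this, a contradiction. By the LSH property this yields $p(q,y) \geq p(q, x'_k)$ for every $y\in T_k(q)$.

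Third, I would bound the miss probability using the stopping criterion. Among the $j^*$ tries fully processed at termination depth $i^*$, the probability that $y$ fails to collide with $q$ at depth $i^*$ in any of them is at most $(1-p(q,y)^{i^*})^{j^*} \leq e^{-j^* p(q,y)^{i^*}}$, by independence across tries. Combining with the geometric fact and the stopping condition $j^* \geq \ln(1/\delta)/p(q, x'_k)^{i^*}$ yields an upper bound of $e^{-\ln(1/\delta)}=\delta$. The main subtlety is that $i^*$, $j^*$, and $x'_k$ are random variables jointly determined by the LSH functions; the resolution is that both the geometric fact and the stopping inequality hold deterministically on every realization of the termination event, so decomposing the sample space over the finitely many possible termination tuples $(i,j,x)$ with $j\geq \ln(1/\delta)/p(q,x)^i$ and bounding each contribution separately preserves the uniform $\delta$ bound. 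Putting these steps together gives $\Pr[y\notin R]\leq \delta$ for each $y\in T_k(q)$ and hence the claim.
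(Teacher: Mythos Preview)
Your approach is the same as the paper's---show, for each true nearest neighbor $y\in T_k(q)$, that $\Pr[y\notin R]\leq\delta$---and your first two steps (non-eviction of true NNs; the inequality $\dist(q,y)\le\dist(q,x'_k)$, hence $p(q,y)\ge p(q,x'_k)$) are correct and in fact more carefully spelled out than in the paper.

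The gap is in your third step, specifically your proposed ``resolution'' of the randomness of the stopping point. Decomposing over termination tuples $(i,j,x)$ and ``bounding each contribution separately'' does not yield the bound: once you condition on the event that the algorithm terminates at $(i,j,x)$, the collisions of $y$ in tries $1,\dots,j$ are no longer independent Bernoulli$\big(p(q,y)^i\big)$ variables, so you cannot claim the conditional miss probability is $(1-p(q,y)^i)^j$. In particular, whether the algorithm stops early or late is heavily correlated with whether $y$ has been encountered, so the pieces do not sum to $\delta$.

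The correct fix is the observation the paper makes (tersely): the invariant $p(q,x'_k)\le p(q,x_k)$ holds at \emph{every} step of the execution, not only at termination. Consequently the stopping criterion $j\ge\ln(1/\delta)/p(q,x'_k)^i$ can never be satisfied before the \emph{deterministic} point $(\tilde i,\tilde j)$, defined as the first $(i,j)$ in the search order with $j\ge\ln(1/\delta)/p(q,x_k)^i$. Hence $\{y\notin R\}$ is contained in the fixed event ``$y$ fails to collide with $q$ at depth $\tilde i$ in each of the first $\tilde j$ tries'', whose probability is $(1-p(q,y)^{\tilde i})^{\tilde j}\le(1-p(q,x_k)^{\tilde i})^{\tilde j}\le\delta$. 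Replacing the random $x'_k$ by the fixed $x_k$ is precisely what removes the measurability issue you flagged; your second step already contains everything needed for this, you just need to use it to produce a deterministic lower bound on how far the algorithm searches rather than to decompose over random outcomes.
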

\begin{proof}
    As introduced in Definition~\ref{def:lsh}, we use the short notation $p(q, x) := \Pr[h(q) = h(x)]$ under the random LSH hash function choice $h$. Let $x_1, \ldots, x_k$ be the $k$ closest neighbors of $q$ in $S$. First, observe that at any stage the algorithm maintains the invariant ``$\mathrm{PQ.size()} < k \textbf{ or } p(q, \mathrm{x_k'}) \leq p(q, x_k)$''. This is true because at any point, the $k$-th closest point $x'_k$ identified by the algorithm satisfies $\dist(q, x'_k) \geq
    \dist(q, x_k)$. Together with the montonicity of the collision probability of LSH, cf.~Definition~\ref{def:lsh}, the invariant holds. Thus, the algorithm cannot terminate until $j'$ tries have been searched at level $i'$ where either $j' \geq \ln(1/\delta)/p(q, x_k)^{i'}$ or $i' = 0$.
	In the first case the probability of not finding a $k$NN of $q$ is at most $(1 - p(q, x_k)^{i'})^{j'} \leq \delta$. 
	In the case of $i' = 0$ the query algorithm degrades to a linear scan and we are guaranteed to report the true $k$NNs. 
\end{proof}
Next, we connect the expected running time of Algorithm~\ref{alg:adaptiveknn} to the optimal expected running time of an algorithm that knows optimal parameter choices for $i$ and $j$. 
We will use $OPT(L, K, k, \delta)$ to denote the optimal expected query time that can be achieved with the natural algorithm that solves $k$NN queries on the LSH Forest by searching $j$ tries at depth $i$ where $i$ and $j$ are chosen to minimize the query time. 
In our expression for the expected the query time we use a unit cost model that counts hash function evaluations and distance computations.
To ensure that each point in the $k$NN set is reported with probability at least $1-\delta$ we search $j = \ln(1/\delta)/p(q, x_k)^i$ tries. 
The expected cost of searching one LSH trie at depth $i$ is $i + \sum_{x \in P} p(q, x)^i$.
\begin{equation*}
OPT(L, K, k, \delta) = \min\left\{ \frac{\ln(1/\delta)}{p(q, x_k)^i}(i + \sum_{x \in P} p(q, x)^i )  \mid 0 \leq i \leq K, \frac{\ln(1/\delta)}{p(q, x_k)^i} \geq L \right\}. 
\end{equation*}
\conf{
We obtain the following lemma with a proof provided in the extended version of this paper. 
}
{
We obtain the following lemma with a proof provided in Appendix~\ref{app:ds}.
}
\begin{lemma} \label{lem:knn_time}
	Let $0 < \delta \leq 1/2$ then with probability $1 - \delta$ we have that \textsc{adaptive-kNN}$(q, k, \delta)$ terminates in expected time $O(OPT(L, K, k, \delta/k) + L(K + k))$. 
\end{lemma}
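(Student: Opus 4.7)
I would prove the lemma in three phases: identify a high-probability stopping point derived from the OPT parameters, verify that Algorithm~\ref{alg:adaptiveknn} actually stops by that point, and then bound the work done along the way using Fact~\ref{fact:lshtrie}.

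\emph{Phase 1 (a good stopping point exists with high probability).} Let $(i^*, j^*)$ be a pair attaining $\mathrm{OPT}(L, K, k, \delta/k)$, so that $j^* = \lceil \ln(k/\delta)/p(q, x_k)^{i^*} \rceil$, where $x_1, \ldots, x_k$ are the true $k$ nearest neighbors of $q$. Let $\mathcal{E}$ be the event that by the time the algorithm finishes processing the $j^*$-th trie at level $i^*$, each of $x_1, \ldots, x_k$ has been inserted into PQ. For a fixed $x_\ell$ with $\ell \le k$, monotonicity of $p$ (Definition~\ref{def:lsh}) gives $p(q, x_\ell) \geq p(q, x_k)$, so
\[
\Pr[x_\ell \text{ missed}] \leq \left(1 - p(q, x_\ell)^{i^*}\right)^{j^*} \leq \exp\left(-j^* \, p(q, x_k)^{i^*}\right) \leq \delta/k.
\]
A union bound over the $k$ neighbors yields $\Pr[\mathcal{E}] \geq 1-\delta$.

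\emph{Phase 2 (on $\mathcal{E}$ the algorithm halts by trie $j^*$ of level $i^*$).} On $\mathcal{E}$, after the $j^*$-th trie at level $i^*$ we have $\mathrm{PQ.size}() = k$ and $x'_k = x_k$, so the check $j \geq \ln(1/\delta)/p(q, x'_k)^{i^*}$ becomes $j \geq \ln(1/\delta)/p(q, x_k)^{i^*}$, which is \emph{strictly weaker} than $j^* \geq \ln(k/\delta)/p(q, x_k)^{i^*}$ since $k \geq 1$. Hence termination occurs at some $j \leq j^*$ at level $i^*$ (possibly earlier if $x'_k$ got close enough quickly).

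\emph{Phase 3 (expected cost up to termination).} I bound the work by the cost of processing all $L$ tries down to level $i^*{+}1$ plus $j^*$ tries down to level $i^*$. By Fact~\ref{fact:lshtrie} the per-trie cost of descending from depth $K$ to depth $i$ telescopes to $O(K + |S_{i,j}(q)|)$, whose expectation is $O(K + \sum_{x\in S} p(q,x)^{i})$. Summing over the $L$ tries descended to $i^*{+}1$ gives
\[
O\!\left(LK + L\sum_{x \in S} p(q,x)^{i^*+1}\right),
\]
and the further work on $j^*$ tries at level $i^*$ adds in expectation
\[
O\!\left(j^*\bigl(i^* + \sum_{x \in S} p(q,x)^{i^*}\bigr)\right) \;=\; O\bigl(\mathrm{OPT}(L,K,k,\delta/k)\bigr).
\]
The constraint $L \leq \ln(k/\delta)/p(q,x_k)^{i^*} = j^*$ implicit in the definition of OPT lets me absorb $L \sum_{x} p(q,x)^{i^*+1} \leq j^* \sum_x p(q,x)^{i^*}$ into the OPT term. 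The remaining $O(L(K+k))$ overhead collects the $LK$ hash evaluations and the $O(k)$-per-trie cost coming from the amortized PQ-maintenance and the $+i$ slack in the leaf buckets of Fact~\ref{fact:lshtrie}.

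The main obstacle is the second summand above: the algorithm cannot know $(i^*, j^*)$ in advance, so it pays for a full sweep of all $L$ tries at each level strictly above $i^*$ before ever reaching $i^*$. The resolution is the telescoping of $\Pi_{i,j}$ sets along each trie (so a full descent from $K$ to $i^*{+}1$ costs only $|S_{i^*+1,j}(q)|+K$, not $\sum_{i>i^*} |S_{i,j}(q)|$) combined with the OPT feasibility inequality $L \leq j^*$, which is exactly what is needed to fold the $L$-tries-above-$i^*$ work into OPT itself, leaving only the unavoidable $L(K+k)$ overhead.
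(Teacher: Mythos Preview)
Your three-phase outline matches the paper's, and Phases~1--2 are correct. The gap is in Phase~3, in the step where you absorb $L\sum_x p(q,x)^{i^*+1}$ into OPT via the inequality $L\le j^*$.

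That inequality is read off the constraint $\ln(1/\delta)/p(q,x_k)^i \ge L$ in the displayed definition of OPT, but this is a typo: the only sensible feasibility condition for the natural algorithm is $\le L$ (one cannot search more tries than the forest contains), and indeed the paper's own proof uses $j'\le L$ in its first displayed inequality. With the correct reading you have $j^*\le L$, not $L\le j^*$, and your absorption step fails. Concretely, take $L=C'j^*$ for large $C'$ and let every far point $x_s$ ($s>k$) have $p(q,x_s)$ just below $p(q,x_k)$; then $L\sum_{s>k}p(q,x_s)^{i^*+1}\approx C'\cdot j^*\sum_{s>k}p(q,x_s)^{i^*}$, so no universal $O(\cdot)$ constant works.

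The paper avoids this by anchoring not at $i^*$ but at the \emph{deepest} feasible level $i'$, i.e., the largest $i$ with $\lceil\ln(k/\delta)/p(q,x_k)^i\rceil\le L$, so that $i'\ge i^*$. Two things follow. First, $j'\le L$ bounds the near-point and depth contributions $j'\bigl(i'+\sum_{s\le k}p(q,x_s)^{i'}\bigr)$ by $L(K+k)$. Second---and this replaces your absorb step---for each far point $x_s$ with $s>k$ the ratio $p(q,x_s)/p(q,x_k)\le 1$, so $\bigl(p(q,x_s)/p(q,x_k)\bigr)^i$ is nonincreasing in $i$; hence
\[
\frac{\ln(k/\delta)}{p(q,x_k)^{i'}}\sum_{s>k}p(q,x_s)^{i'}\ \le\ \frac{\ln(k/\delta)}{p(q,x_k)^{i^*}}\sum_{s>k}p(q,x_s)^{i^*}\ \le\ \mathrm{OPT}(L,K,k,\delta/k).
\]
Thus the far-point cost is controlled by monotonicity in the level $i$, not by any comparison between $L$ and $j^*$. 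Your explicit telescoping and your accounting for all $L$ tries above the stopping level are good instincts; once you move the anchor from $i^*$ to $i'$, the maximality of $i'$ gives $L<\ln(k/\delta)/p(q,x_k)^{i'+1}$, and the same monotonicity argument disposes of that term as well.
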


\subsection{Reducing the Number of LSH Evaluations}
In~\cite{Christiani17}, Christiani provides a uniform framework encompassing previous ad-hoc solutions~\cite{Dahlgaard2017FastSS,andoni2006efficient} to reduce the amount of hash function evaluations when solving the approximate near neighbor problem. \conf{In the extended version of this paper}{In Appendix~\ref{app:ds}} we describe how these techniques can be applied when solving the $k$-NN problem. In particular, each method requires an adaption of Algorithm~\ref{alg:adaptiveknn} and comes with their own stopping criterion. We provide a succinct description of the methods next. 

\subparagraph*{Tensoring} 
Assume that $K$ is an even integer and $L$ is an even power of two.
Form two collections of $\sqrt{L}$ tuples of $K/2$ LSH functions.
Each trie in the LSH forest is now indexed by $j_1, j_2 \in \{1, \dots, \sqrt{L}\}$.
The $K$ LSH functions used in the $(j_1, j_2)$st trie are taken by interleaving the $K/2$ functions in the $j_1$st tuple of the first collection and the $j_2$nd tuple of the second collection.
This allows us to construct $L$ LSH tries of max depth $K$ using only $\sqrt{L}K$ independent functions.

\subparagraph*{Pooling} 
Form a pool of $m$ independent LSH functions that will be shared among LSH tries.
For each LSH trie in the LSH Forest we independently sample a random subset of $K$ LSH functions from the pool that we use in place of fully random LSH functions.
The LSH pool can be viewed as a randomized construction of a smaller LSH family from our original LSH family.
As $m$ increases LSH functions sampled without replacement from the pool will work almost as well as independent samples from the LSH family.

\subsection{Sketching for faster distance computations}
Locality-sensitive hashing can be used to produce $1$-bit sketches for efficient similarity estimation~\cite{Charikar02, li2011theory}.
The idea is to use a random hash function to hash the output of a locality-sensitive hash function to a single bit, and then packing $w$ such bits into a $w$-bit machine word.
We can use word parallelism (alternatively table lookups) to count the number of collisions between $w = \Theta(\log n)$ such sketches in $O(1)$ time, allowing us to efficiently estimate the similarity between points in our original space.
Depending on the LSH scheme used and the distribution of distances between the query point and the data, using $1$-bit sketches can replace many of the expensive distance computations performed by the query algorithm with cheaper distance estimations through sketching. 
See~\cite{satuluri2012bayesian,Christiani17} for more details on sketching in the context of the ANN problem.

\section{Implementation Overview}
\label{sec:implementation}

\subsection{Overall structure}

\begin{figure}
    \centering
    \includegraphics[width=\textwidth]{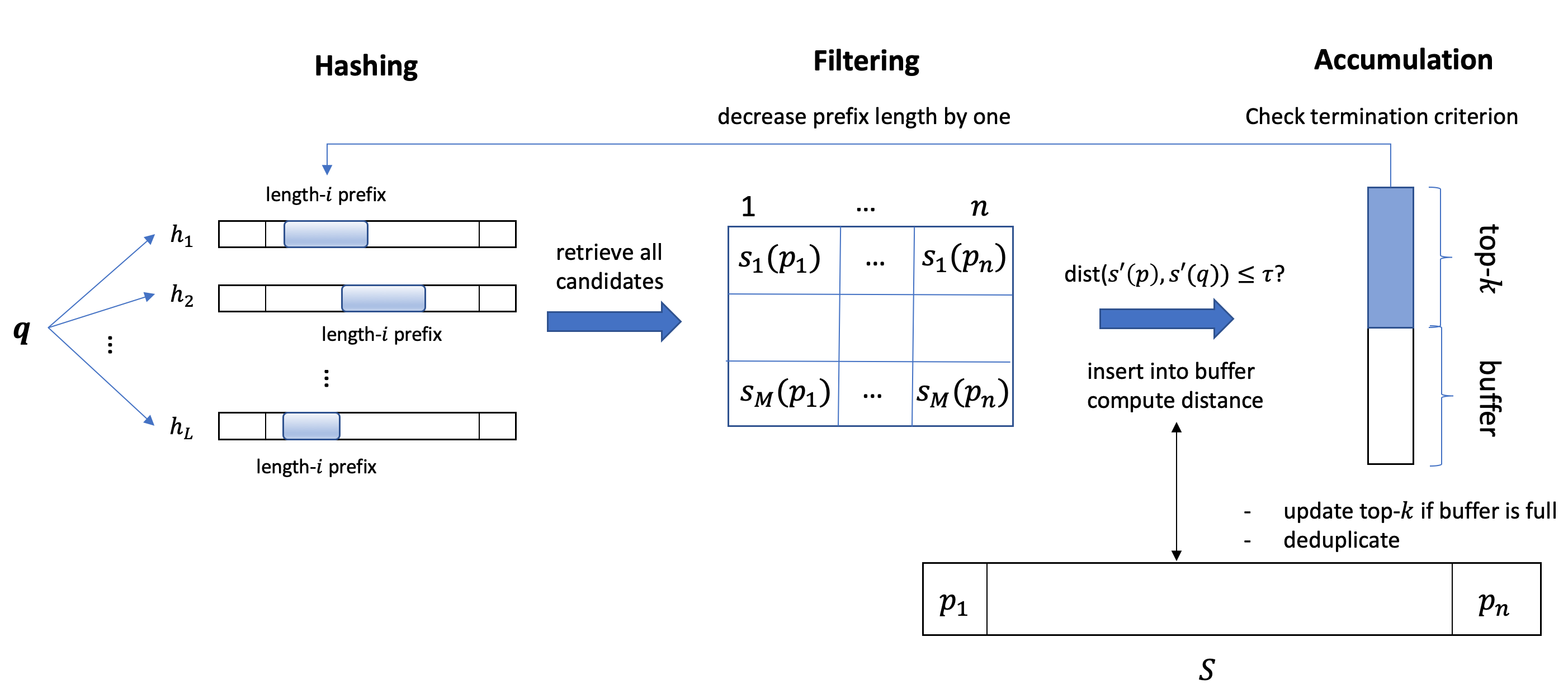}
    \caption{Overall structure of our implementation.}
    \label{fig:structure}
\end{figure}

Figure~\ref{fig:structure} presents an overview over our data structure. Deviating from the pointer-based trie data structure described in Section~\ref{sec:data:structure}, we use an array of indices sorted by hash code, which improves both cache- and space-efficiency.
Additionally, a sketching-based filtering approach is used to 
reduce the number of distance computations carried out during a query. In the following, we make the implementation precise.

A query is answered in rounds, starting at the maximum considered prefix length. 
A single round works as follows:
Repetitions are inspected one after the other. In each repetition, 
all (new) candidates sharing the hash prefix with the query point are retrieved. 
For each such candidate point, a sketch is chosen and checked against the corresponding sketch of the query point. 
Let $\tau$ denote a threshold value that we will discuss how to set later.
If the Hamming distance between candidate and query sketch is less than $\tau$,
the data point passes the filter, the distance computation is carried out, and the point with its distance to the query is inserted into the accumulator buffer.
Once this buffer is full we discard all points not belonging to
the top-$k$. 
While this can be done in time $O(k)$, we found that an implementation based on sorting was faster for the values of $k$ we considered.
The termination criterion is checked after all repetitions are inspected.  
If the criterion is satisfied, the algorithm returns the indices of the top-$k$ points in the accumulator, otherwise the prefix length is decreased by one and a new round starts.

The following subsections make this process more detailed and discuss engineering choices
in the case of angular distance on unit length vectors. 

\subsection{Engineering choices}

\subparagraph{Vector storage}
We normalize the query vector and all data vectors, which means that all dot products will be between -1 and 1. 
This allows storing vectors in a fixed point format represented using 16-bit integers. 
Such a representation enables AVX2-enabled instructions that allow 16 multiplications at once. 
To use the AVX instructions, all vectors are stored in a 1-dimensional array with padding to ensure 256-bit alignment.

\subparagraph{Retrieving candidate points}
The $j$th LSH repetition is represented as a sorted array of tuples of the form ($h_{\leq i, j}(p), \text{pos}(p)$),
where $\text{pos}(p)$ is the index of $p \in S$ in the original dataset, and $h_{\leq i, j}(p)$ is the hash code
of $p$ under hash functions $h_{1, j} \circ \cdots \circ h_{i, j}$. We view the hash code as a bitstring.
Before retrieving candidates, we first find the tuple with the longest common prefix with the hash of the query vector. 
This is achieved using binary search, where we tabulate the lexicographic position of each 13-bit prefix to speed up the search. 

Each time more candidates are requested, all tuples whose hash code has a common prefix of length at least $i$ are considered.
Each iteration decrements $i$ in order to increase the number of considered vectors. 
Since the vectors are stored in sorted order, all tuples with a common prefix of length $i$ are stored adjacent to the tuples with common prefixes of length $i+1$. 
Furthermore, they are all stored on the same side, depending on whether the removed bit was a 0 or 1. 
This means that the range of considered vectors can be updated efficiently. 

Every access in the array is done in a segment of size $B = 12$, regardless of whether the prefix matches or not. This costs almost no time, because the random memory access is the expensive part, and only improves quality. A discussion of suitable values of $B$ is provided in \conf{the extended version of this paper}{Appendix~\ref{app:segment:size}}.

\subparagraph{Filtering candidate points}
The filtering step is an additional measure to reduce the number of distance computations. Fix a point $p \in S$. 
During the index building phase, we store $M$ 64-bit sketches $s_1(p), \ldots, s_M(p)$ obtained via HP LSH.
If $p$ is retrieved as a candidate, retrieve a sketch $s'(p)$ using a pseudorandom transformation of the repetition number $j$.
Next, compute the Hamming distance between $s'(p)$ and $s'(q)$. 
If the distance is at most $\tau \in \{0,\ldots, b\}$, $p$ passes the filter and is inserted into the accumulator. 
A challenge in the context of $k$-NN queries is that the algorithm does not know the distance to the $k$-th nearest neighbor. This means that the threshold has to be adapted according to the points inspected so far. We set
threshold $\tau$ dynamically according to the probability that a vector with Hamming distance $\tau$ or less has a dot product larger than the smallest dot product in the current top-$k$. 

\subparagraph{Computing distances}
The accumulator takes care of the candidate points that pass the filter step. 
It de-duplicates the candidate list and keeps track of the top-$k$ points found so far. 
The accumulator consists of a buffer of size $2k$, which contains the current top-$k$ indices, along with their dot products, and a buffer of size $k$, which contains points that passed the filter along with their dot products.
Once this buffer is full, the top-$k$ list is updated. 

\subsection{Locality-sensitive Hash Functions}

\subparagraph{Supported LSH Functions}
The supported hash functions relevant for the paper are HP LSH~\cite{Charikar02} and CP LSH~\cite{andoni2015practical}.\footnote{PUFFINN is generic to the LSH family, but some engineering choices are different for other similarity measures, such as set similarity. At the moment, PUFFINN also support ($b$-bit) MinHash~\cite{bro97b,li2011theory}.} 
For the latter, the implementation encompasses both the exact version and the pseudorandom version with three applications of the fast Hadamard transform, see~\cite{andoni2015practical} for more details.
We always regard hash functions as producing an $\ell$-bit string as its output. For HP LSH, we have $\ell = 1$, for CP LSH, we have $\ell = \lceil \log 2d\rceil$.
In case the algorithm did not terminate after exploring all $L$ repetitions, decreasing the prefix length by one always means that we disregard the last bit of the hash.
This is to avoid a sudden increase in the number of collisions in the case of CP LSH. This is theoretically sound since the termination criteria from Section~\ref{sec:data:structure} only need a lower bound on the collision probability at a certain prefix-length, which can be estimated for individually bit lengths of CP LSH. 

\subparagraph{Estimating Collision Probabilities}
Recall from Section~\ref{sec:data:structure} that evaluating the collision probability of two points at a certain distance is a key ingredient in the query algorithm. 
While such a formula is easy to derive for HP LSH, we only know of the asymptotic behavior of collision probabilities for CP LSH~\cite{andoni2015practical}.  
To overcome this obstacle, we find a Monte Carlo estimate on the collision probability of unit vectors with inner product $\alpha, -1 \leq \alpha \leq 1$, by enumerating different values of $\alpha$ in a window of size .05. 
For a fixed distance, we consider two points $x = (1, 0, \ldots, 0)$ and $y = (\alpha, \sqrt{1 - \alpha^2}, 0, \ldots, 0)$,\footnote{By spherical symmetry, the collision probabilities are the same for all pairs of points with inner product $\alpha$.} draw 1\,000 random CP hash functions, count the number of collisions, and tabulate the estimate. 
As mentioned above, we always consider bit strings, so the probability estimation for CP LSH is made for all bit lengths up to $\ell = \lceil \log 2d\rceil$. 

In the query procedure, we round the distance down to the closest distance value for which we have tabulated an estimate and use that to bound the collision probability. 
The evaluation in the next section will show that this yields a negligible loss in quality compared to an exact variant using HP LSH.

\section{Experimental Evaluation}
\label{sec:evaluation}

\subparagraph{Implementation and Experimental Setup} PUFFINN is implemented in C++ and comes with a wrapper to the Python language. 
Experiments were run on 2x 14-core Intel Xeon E5-2690v4 (2.60GHz) with 512GB of RAM using 
Ubuntu 16.10 with kernel 4.4.0. It is compiled using \texttt{g++} with the compiler flags 
\texttt{-std=c++14 -Wall -Wextra -Wno-noexcept-type -march=native -O3 -g -fopenmp}. Index building was multi-threaded, queries 
where answered sequentially in a single thread. The experiments were conducted in the \texttt{ann-benchmarks} framework from~\cite{AumullerBF17}. The code, raw experimental results, and the Jupyter notebook used for the evaluation are available at \url{https://github.com/puffinn/esa-paper}.

\subparagraph{Quality and Performance Metrics} As quality metric we measure the
individual recall of each query, i.e., the fraction of points reported by the
implementation that are among the true $k$-NN. As performance metric, we record 
individual query times. We usually report on the \emph{throughput}, i.e., the average number of queries that can
be answered in one second. In plots, the throughput is dubbed QPS for \emph{queries per second}.

\subparagraph{Parameter Choices} PUFFINN has two parameters: the space a user is willing 
to allocate for the index, and the expected recall that should be achieved. We run PUFFINN with expected 
recall values in the set $\{.1, .2, .5, .7, .9, .95\}$. As space parameters, we use the 
doubling range 512 MB to 32 GB. We always retrieve the ten nearest neighbors.

\subparagraph{Objectives of the Experiments} 
Our experiments are tailored to answer the following high-level questions (\emph{HL-Q}): 



\begin{enumerate}[leftmargin=1.5cm]
    \setlength\itemsep{0em}
    \item[(HL-Q1)] Given the choices of sketching and hash function evaluation methods described in the 
        previous two sections,
        how do they compare to each other w.r.t. empirical performance (Sections~\ref{sec:eval:filtering}--\ref{sec:eval:hashing})? 
    \item[(HL-Q2)] Can a parameterless method compete with implementations that have parameters tuned to the data and query workload (Section~\ref{sec:eval:comparison})?  
\end{enumerate}

\noindent To answer these questions, we will consider the following implementation-level questions (\emph{IL-Q}):

\begin{enumerate}[leftmargin=1.5cm]
    \setlength\itemsep{0em}
    \item[(IL-Q1)] What is the influence of the filtering approach to the quality/QPS (Section~\ref{sec:eval:filtering})?
    \item[(IL-Q2)] What is the influence of the update threshold $\tau$ to quality and QPS 
        (Section~\ref{sec:eval:filtering})?  
    \item[(IL-Q3)] How does the space parameter influence the QPS (Section~\ref{sec:eval:size})? 
    \item[(IL-Q4)] How does the hash function and evaluation strategy influence performance (Section~\ref{sec:eval:hashing})?  
\end{enumerate}

\subparagraph{Real-World Datasets}
\begin{table*}[t]
  \small
  \centering
  \begin{tabular}{l r r}
    \toprule
    \textbf{Dataset} & \textbf{Data Points/Query Points} & \textbf{Dimensions} \\
    \midrule
    \textsf{GLOVE}~\cite{pennington2014glove} & 1\,183\,514/10\,000 & 100 \\
    \textsf{GLOVE-2M}~\cite{pennington2014glove}& 2\,196\,018/10\,000& 300 \\
    \textsf{GNEWS-3M}~\cite{word2vec} & 3\,000\,000/10\,000   & 300 \\
    \textsf{SYNTHETIC} & 1\,000\,000/\phantom{0}1\,000 & 300 \\
    \bottomrule
  \end{tabular}
  \caption{Datasets under consideration.} 
\label{tab:datasets}
\end{table*}

Table~\ref{tab:datasets} gives an overview of the datasets used in the experiments. 
Cosine similarity is usually used in the context of word embeddings, so we use three 
real-world datasets that originate from two different word embedding algorithms. Unless stated otherwise, all experiments are carried out on \textsf{Glove-1M}. 

\subparagraph{Synthetic Dataset}

We describe a synthetic data set and query distribution that, as we will see, is challenging for many heuristic nearest neighbor implementations.
For a fixed $d \geq 1$, we construct a dataset over $\mathbb{R}^{3d}$ as follows. 
For each $i \in \{1, \ldots, n - 1\}$, let $y_i$ and $z_i$ be two $d$-dimensional vectors of expected length $\sqrt{1/2}$ where each coordinate is sampled independently from $\mathcal{N}(0, 1/{2d})$.  
Define $x_i = (0^d, y_i, z_i)$. Let $v$ and $w$ be two more random vectors 
of expected length $\sqrt{1/2}$.
Finally, set $x_n = (v, w, 0^d)$.
We define $m$ query vectors as follows:
For each $i \in \{1, \ldots, m\}$, let $q_i = (v, 0^d, r_i)$, 
where $r_i$ is a random vector of length $\sqrt{1/2}$.

This construction has the property that $x_n$ is the nearest neighbor of every query.
Furthermore, all data points have unit length in expectation.
The distance from each $q_i$ to $x_n$ is expected to be 1 (or equivalently $\E[\ip{q_i}{x_n}] = \frac{1}{2}$), whereas the distance to all other points is around $\sqrt{2}$ (i.e., $\E[\ip{q_i}{x_j}] = 0$).  
In the experiment, we choose $n = 1\,000\,000$ and $d = 100$.

\subparagraph{Other approaches}
We compare PUFFINN to the following implementations:
        FALCONN, a state-of-the-art LSH implementation using the theory developed in~\cite{andoni2015practical};
        ONNG, a recent graph-based approach described in~\cite{Iwasaki18};
        ANNOY, the best-performing implementation of a random-projection forest~\cite{annoy};
        IVF, a $k$-means clustering based approach~\cite{faiss};
        FLANN, a collection of different approaches with tuning of recall value~\cite{flann};
        VantagePointTree~\cite{Yianilos93} as implemented in NMSlib~\cite{nmslib} with recall guarantees.

These approaches stood out in the benchmarking paper from Aumüller et al.~\cite{AumullerBF17} as performing best on many datasets. 
We use the same parameter space as in~\cite{AumullerBF17} to test the performance of the different implementations. For each implementation, we report the best results achieved via a grid search over the (usually large) parameter space. 
We refer to that paper or the original papers for more details on these approaches.
Except from FLANN and VantagePointTree, no other implementation allows to specify a guarantee on recall.

\subsection{Filtering Approach}
\label{sec:eval:filtering}

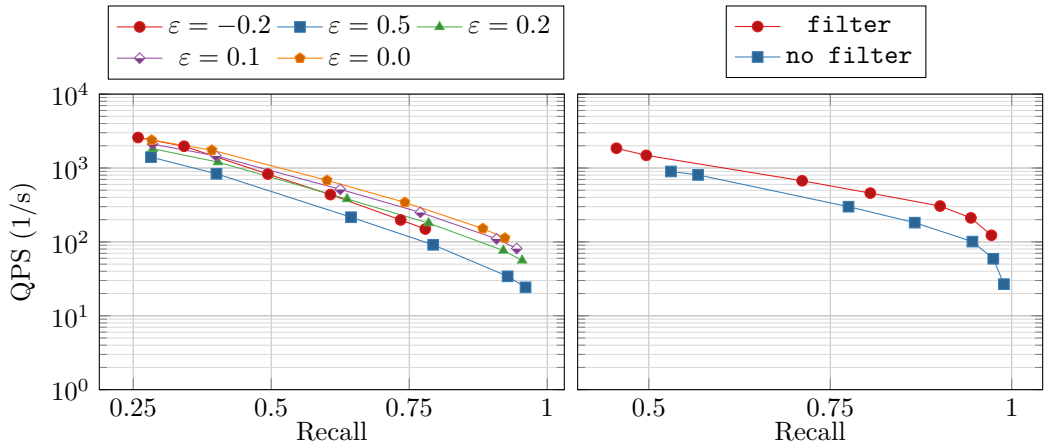
\begin{figure}
    \begin{tikzpicture}[every mark/.append style={mark size=2pt}]
        \begin{groupplot}[group style = {group size = 2 by 1, horizontal sep = .5em, group name = robustness}, height=5.5cm, width=.55\textwidth,grid = both, grid style={line width=.1pt, draw=gray!30}, major grid style={line width=.2pt,draw=gray!50}, ylabel style = {yshift=-1.5ex}, xlabel style = {yshift=1.5ex}, ymin = 1, ymax = 10000, xtick = {0, 0.25, 0.5, 0.75, 1}, max space between ticks=20]
            \nextgroupplot[
                xlabel = {Recall},
                ylabel = {QPS (1/s)},
                ymode=log,
                legend style = {opacity = 1.0, text opacity = 1, at = {(1, 1.3)}},
                legend columns = 3,
                grid = both, grid style={line width=.1pt, draw=gray!30},
major grid style={line width=.2pt,draw=gray!50},
            ]
            \addplot coordinates { (0.258730, 2587.712787) (0.342180, 1970.795206) (0.493920, 827.650660) (0.607140, 436.821394) (0.734470, 198.127343) (0.779110, 149.310894) };
\addlegendentry{$\varepsilon=-0.2$};
\addplot coordinates { (0.282170, 1404.429266) (0.400740, 834.940250) (0.644530, 215.779705) (0.793400, 91.314531) (0.928280, 34.208335) (0.960760, 24.272033) };
\addlegendentry{$\varepsilon=0.5$};
\addplot coordinates { (0.285410, 1822.651786) (0.403560, 1202.653164) (0.637440, 380.884954) (0.785120, 179.788312) (0.920550, 76.541051) (0.954590, 56.166717) };
\addlegendentry{$\varepsilon=0.2$};
\addplot coordinates { (0.285120, 2105.482561) (0.399780, 1461.745181) (0.625080, 514.225790) (0.770220, 250.630654) (0.908220, 109.930540) (0.944770, 81.413828) };
\addlegendentry{$\varepsilon=0.1$};
\addplot coordinates { (0.283340, 2389.254889) (0.392390, 1745.706468) (0.600910, 681.987872) (0.741980, 342.419761) (0.883310, 151.578721) (0.923270, 112.862351) };
\addlegendentry{$\varepsilon=0.0$};
            \nextgroupplot[
                xlabel = {Recall},
                ymode=log,
                yticklabels = {,,},
                legend style = {opacity = 1.0, text opacity = 1, at = {(.75, 1.3)}},
                grid = both, grid style={line width=.1pt, draw=gray!30},
major grid style={line width=.2pt,draw=gray!50},
            ]

\addplot coordinates { (0.455340, 1851.714399) (0.496530, 1483.303657) (0.711220, 671.488070) (0.805140, 457.223737) (0.901290, 305.465230) (0.943720, 211.285127) (0.971990, 122.818919) };
\addlegendentry{filter};
\addplot coordinates { (0.530490, 898.193404) (0.567850, 805.809243) (0.775120, 299.487249) (0.866350, 182.199221) (0.945830, 100.939947) (0.974600, 59.115333) (0.989220, 26.795525) };
\addlegendentry{no filter};
\end{groupplot}\end{tikzpicture}
\caption{Left: Influence of setting the threshold of sketches to $1 + \varepsilon$ times the expected 
difference at the distance of the $k$-th closest point found so far. Expected recall values: 0.1, 0.2, 0.5, 0.7, 0.9, 0.95, space: 1GB. Right: Difference between filtering/no filtering, space: 4GB.}
\label{plot:sketch:eps}
\end{figure}

We evaluate the filtering approach in two directions. 
First, we report on the quality-performance trade-off of different update strategies. 
Second, we benchmark the architecture against a ``no filtering'' approach.
Experiments were done using a collection of $32$ sketches using $64$ bits each. 

Figure~\ref{plot:sketch:eps} (top) reports on the influence of setting the passing threshold of the filtering step dynamically to a fraction of $\tau = 1 + \varepsilon$ of the expected difference\footnote{The expected difference is just 64 times the probability that two points at distance $r_k$ collide under a random hyperplane.} for a point at the distance of the current $k$-th nearest neighbor. 
A $\varepsilon$-value of $0.0, 0.1$, and $0.2$ give good results in this empirical setting.
Above $0.1$ there is almost no gain in quality but a huge drop in QPS. 
Setting the threshold below the expectation results in a large loss in quality. 
For the remainder of the experiments, we set the threshold to the expectation, i.e., we use $\delta = 0$. 

Figure~\ref{plot:sketch:eps} (bottom) allows us to see the difference between using resp. not using the filtering approach. 
For low recall values, the filtering approach increases the QPS by a factor of roughly 1.5. 
For example, the filtering approach can answer around 1\,400 QPS at recall .5, without filtering this number drops to 900. 
At high recall, the difference is more pronounced. 
A recall of 97\% is achieved with 122 QPS using filtering and the same recall is achieved at around 50 QPS without filtering.
We can see a clear difference in the achieved quality between the two variants. 
The sketching approach usually decreases the recall for a fixed expected recall by .08 (for low recall) to .03 (for high recall). 
However, the recall is still above the guarantee in both cases. 


\subsection{Influence of the Index Size}
\label{sec:eval:size}

\begin{figure}
    \begin{tikzpicture}[every mark/.append style={mark size=2pt}]
            \begin{axis}[
                xlabel = {Recall},
                ylabel = {QPS (1/s)},
                width = 1\textwidth,
                height = 5.5cm,
                ymode=log,
                legend style = {opacity = 1.0, text opacity = 1, at = {(0.725, 0.25)}},
                legend columns = 5, 
                grid = both, grid style={line width=.1pt, draw=gray!30},
major grid style={line width=.2pt,draw=gray!50},
            ]

\addplot coordinates { (0.325250, 1131.514702) (0.411850, 565.772778) (0.617810, 218.440290) (0.727810, 127.353683) (0.840960, 26.242067) (0.900190, 12.444223) (0.965460, 7.079529) };
\addlegendentry{512 MB};
\addplot coordinates { (0.338850, 2611.143601) (0.426910, 1788.637379) (0.620560, 661.041328) (0.728170, 354.297099) (0.874050, 125.625582) (0.914460, 92.609257) (0.967530, 51.338165) };
\addlegendentry{1GB};
\addplot coordinates { (0.367200, 2515.122205) (0.470590, 1671.278202) (0.661030, 762.693220) (0.776500, 474.166746) (0.891390, 219.366445) (0.922510, 151.620309) (0.967630, 74.147491) };
\addlegendentry{2 GB};
\addplot coordinates { (0.455340, 1851.714399) (0.496530, 1483.303657) (0.711220, 671.488070) (0.805140, 457.223737) (0.901290, 305.465230) (0.943720, 211.285127) (0.971990, 122.818919) };
\addlegendentry{4 GB};
\addplot coordinates { (0.611980, 973.464294) (0.619240, 992.908636) (0.679530, 741.272429) (0.801660, 439.109809) (0.914440, 244.890016) (0.952720, 194.159743) (0.978370, 135.779841) };
\addlegendentry{8 GB};
\end{axis}\end{tikzpicture}
\caption{Influence of index size to quality-performance trade-off.}
\label{plot:index:size}
\end{figure}
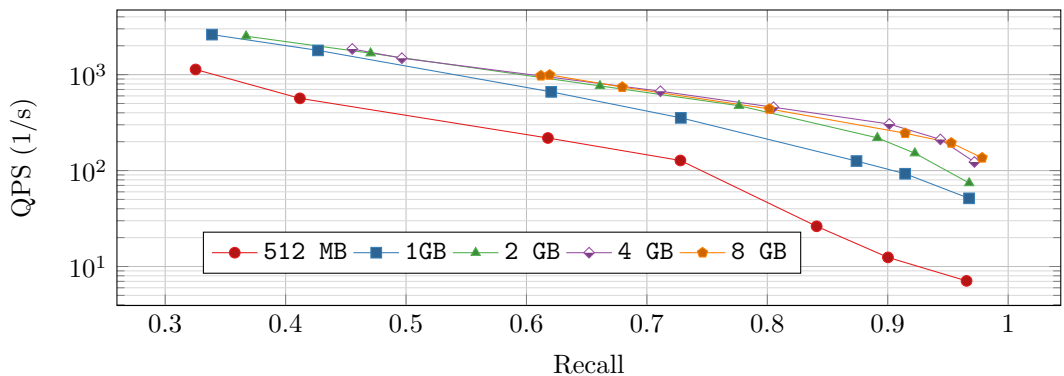

We turn our focus to the third implementation-level question: How does the index size influence the performance? We note that the index size includes the whole data structure, including storage of the original dataset and the hash functions.
Figure~\ref{plot:index:size} reports on the quality-performance trade-off achieved by the implementation for different index sizes. 
We observe that larger index sizes provide better performance, but the influence is diminishing at more than 4 GB. 
A small index yields a small number of repetitions, which means that the algorithm has to explore many levels in the data structure. 
For a recall of .9, increasing the amount of space from 512 MB to 2 GB increases the QPS from 12 to roughly 200. 
Doubling the allotted space to 4GB results in around 300 QPS, which is roughly the same for 8 GB as well. 
We can see that the achieved recall is above the set guarantee threshold for all tested index sizes. (Each data point
corresponds to a recall value from the set of tested recall values.)

\subsection{Choice of Hash Function and Evaluation Strategy}
\label{sec:eval:hashing}

We start by evaluating different hash evaluation strategies. 
We implemented the following three different evaluation strategies in PUFFINN: independent, 
tensor, and pool. For the pooling strategy, we set up a pool containing 3072 bits. 
In the following, we fix the hash function used to be CP LSH using fast Hadamard transform.
Figure~\ref{plot:eval:strategies} shows a comparison between the three evaluation strategies for different index sizes.
As we can see, tensoring is never better than the pooling strategy. 
Furthermore, independent gives better performance for a fixed quality for large index sizes, but takes
more time for initializing the hash values at low recall.
We note that when using the exact CP LSH, there is a huge difference between independent and pooling, in particular for large index sizes. For example, CP using independent hash functions achieves not more than 80 QPS for the 8 GB index in the right plot in Figure~\ref{plot:eval:strategies}.
For all of these reasons, we fix the implementation to use the pooling strategy.

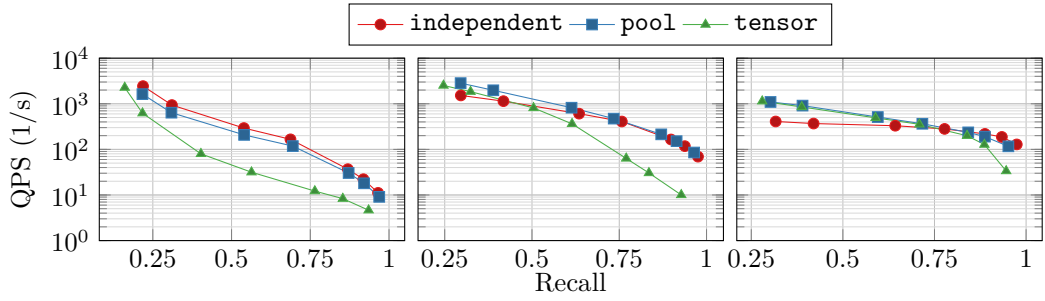
\begin{figure}
    \begin{tikzpicture}[every mark/.append style={mark size=2pt}]
        \begin{groupplot}[group style = {group size = 3 by 1, horizontal sep = .5em, group name = robustness}, height=4cm, width=.4\textwidth,grid = both, grid style={line width=.1pt, draw=gray!30}, major grid style={line width=.2pt,draw=gray!50}, ylabel style = {yshift=-1.5ex}, xlabel style = {yshift=1.5ex}, ymin = 1, ymax = 10000, xtick = {0, 0.25, 0.5, 0.75, 1}, max space between ticks=20]
            \nextgroupplot[
                xlabel = {},
                ylabel = {QPS (1/s)},
                ymode=log,
                legend style = {opacity = 1.0, text opacity = 1, at = {(2.4, 1.3)}},
                legend columns = 3,
                grid = both, grid style={line width=.1pt, draw=gray!30},
major grid style={line width=.2pt,draw=gray!50},
            ]

\addplot coordinates { (0.218860, 2439.874901) (0.310630, 937.482289) (0.539220, 293.450016) (0.687090, 166.103970) (0.869310, 36.824938) (0.918720, 22.182805) (0.965280, 11.125753) };
\addlegendentry{independent};
\addplot coordinates { (0.217050, 1627.530098) (0.309070, 637.729792) (0.539980, 207.844760) (0.694440, 118.066237) (0.871780, 30.319863) (0.920850, 18.030915) (0.968930, 9.099753) };
\addlegendentry{pool};
\addplot coordinates { (0.160860, 2269.313442) (0.217340, 622.028026) (0.402780, 79.900321) (0.563330, 31.585476) (0.764390, 12.184354) (0.853330, 8.317483) (0.935180, 4.639254) };
\addlegendentry{tensor};
            \nextgroupplot[
                xlabel = {Recall},
                ylabel = {},
                ymode=log,
                yticklabels = {,,},
                legend style = {opacity = 1.0, text opacity = 1},
                grid = both, grid style={line width=.1pt, draw=gray!30},
major grid style={line width=.2pt,draw=gray!50},
            ]

\addplot coordinates { (0.296140, 1523.817654) (0.418220, 1141.607159) (0.634920, 607.453475) (0.757270, 408.849308) (0.898060, 164.988520) (0.937830, 117.763050) (0.975500, 69.286258) };
\addplot coordinates { (0.295600, 2837.869323) (0.389160, 1977.281027) (0.613260, 820.133249) (0.734160, 472.841963) (0.870270, 213.582206) (0.914250, 151.447258) (0.964430, 85.402692) };
\addplot coordinates { (0.246490, 2521.450535) (0.324000, 1846.328935) (0.504450, 810.567865) (0.614720, 363.442077) (0.769920, 63.390998) (0.834950, 30.404024) (0.927590, 10.025918) };
            \nextgroupplot[
                xlabel = {},
                ylabel = {},
                yticklabels = {,,},
                ymode=log,
                legend style = {opacity = 1.0, text opacity = 1},
                grid = both, grid style={line width=.1pt, draw=gray!30},
major grid style={line width=.2pt,draw=gray!50},
            ]

\addplot coordinates { (0.316590, 409.186979) (0.420020, 368.664506) (0.642970, 330.060063) (0.777570, 281.204366) (0.887680, 218.554135) (0.933850, 187.681813) (0.974490, 128.432771) };
\addplot coordinates { (0.302500, 1095.998051) (0.389860, 912.374386) (0.595260, 511.679217) (0.716930, 364.523481) (0.842080, 236.199456) (0.887480, 190.263860) (0.951630, 116.554429) };
\addplot coordinates { (0.279900, 1137.428960) (0.387790, 852.257303) (0.589160, 491.674136) (0.708890, 354.676337) (0.838900, 199.400008) (0.886220, 127.602541) (0.945710, 33.554830) };
\end{groupplot}
\end{tikzpicture}
\caption{Comparison of hash evaluation strategies. Left: 512 MB, center: 2 GB, right: 8 GB.}
\label{plot:eval:strategies}
\end{figure}

We turn our focus to the choice of hash function. 
Figure~\ref{plot:eval:functions} gives an overview of three different index sizes using the pooling strategy for hash function evaluation.
On the smallest index, HP LSH works well, in particular for high recall. 
If there is space for more repetitions, CP LSH becomes the method of choice, and 
FHT-CP is a bit faster than the exact method. 

\begin{figure}
    \begin{tikzpicture}[every mark/.append style={mark size=2pt}]
        \begin{groupplot}[group style = {group size = 3 by 1, horizontal sep = .5em, group name = robustness}, height=4cm, width=.4\textwidth,grid = both, grid style={line width=.1pt, draw=gray!30}, major grid style={line width=.2pt,draw=gray!50}, ylabel style = {yshift=-1.5ex}, xlabel style = {yshift=1.5ex}, ymin = 1, ymax = 10000, xtick = {0, 0.25, 0.5, 0.75, 1}, max space between ticks=20]
            \nextgroupplot[
                xlabel = {},
                ylabel = {QPS (1/s)},
                ymode=log,
                legend style = {opacity = 1.0, text opacity = 1, at = {(2.1, 1.3)}},
                legend columns = 3,
                grid = both, grid style={line width=.1pt, draw=gray!30},
major grid style={line width=.2pt,draw=gray!50},
            ]
\addplot coordinates { (0.197780, 2328.107544) (0.274560, 1177.412316) (0.487530, 347.156600) (0.639110, 199.773429) (0.834920, 54.947634) (0.889600, 31.264120) (0.948600, 15.382934) };
\addlegendentry{CP};
\addplot coordinates { (0.217050, 1627.530098) (0.309070, 637.729792) (0.539980, 207.844760) (0.694440, 118.066237) (0.871780, 30.319863) (0.920850, 18.030915) (0.968930, 9.099753) };
\addlegendentry{FHT-CP};
\addplot coordinates { (0.238340, 1188.359630) (0.352130, 527.994856) (0.619880, 152.211096) (0.778640, 76.867077) (0.923150, 32.138610) (0.956870, 21.026931) (0.985020, 10.550255) };
\addlegendentry{HP};
            \nextgroupplot[
                xlabel = {Recall},
                ylabel = {},
                yticklabels = {, ,},
                ymode=log,
                legend style = {opacity = 1.0, text opacity = 1},
                grid = both, grid style={line width=.1pt, draw=gray!30},
major grid style={line width=.2pt,draw=gray!50},
            ]

\addplot coordinates { (0.272160, 2485.445700) (0.352690, 1931.126076) (0.549150, 979.886461) (0.680150, 570.416917) (0.821350, 263.646418) (0.876740, 184.161734) (0.944490, 102.302105) };
\addplot coordinates { (0.295600, 2837.869323) (0.389160, 1977.281027) (0.613260, 820.133249) (0.734160, 472.841963) (0.870270, 213.582206) (0.914250, 151.447258) (0.964430, 85.402692) };
\addplot coordinates { (0.337310, 1716.191782) (0.442700, 987.146609) (0.676270, 367.686025) (0.807100, 213.014258) (0.927630, 102.810280) (0.958730, 73.589808) (0.985230, 40.636774) };
    \nextgroupplot[
                xlabel = {},
                ylabel = {},
                yticklabels = {, ,},
                ymode=log,
                legend style = {opacity = 1.0, text opacity = 1},
                grid = both, grid style={line width=.1pt, draw=gray!30},
major grid style={line width=.2pt,draw=gray!50},
            ]

\addplot coordinates { (0.261410, 2194.791697) (0.343960, 1789.626751) (0.537790, 908.059069) (0.663820, 612.577444) (0.816580, 361.017337) (0.871700, 259.153172) (0.935870, 154.649420) };
\addplot coordinates { (0.303840, 2479.955647) (0.408640, 1831.641755) (0.634680, 805.553295) (0.747010, 552.533155) (0.875090, 292.949707) (0.918860, 226.234184) (0.968050, 111.696293) };
\addplot coordinates { (0.378260, 860.564333) (0.487540, 596.979718) (0.707630, 271.082372) (0.826060, 168.717916) (0.935260, 87.088706) (0.963630, 63.972241) (0.986620, 36.784454) };
%
\end{groupplot}
\end{tikzpicture}
\caption{Hash function comparison (pooling). Left: 512 MB, center: 1 GB, right: 4 GB.}
\label{plot:eval:functions}
\end{figure}
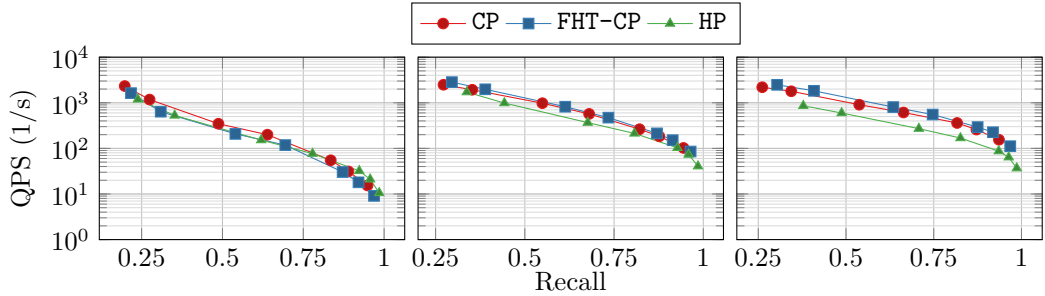

\subsection{Summary of Implementation Choices}
In light of our first high-level question, we observed that sketching provides a good performance increase at negligible cost. 
The larger the index size, the faster PUFFINN can answer queries. 
However, it works well on small index sizes as well. 
We fixed the pooling strategy as the evaluation strategy since it was much faster than tensoring and allows to use exact methods such as CP LSH when FHT-CP does not satisfy the guarantees one wishes for. 

\subsection{Comparison to Other Approaches}
\label{sec:eval:comparison}
\begin{figure}
    \centering
    \begin{tikzpicture}[every mark/.append style={mark size=2pt}]
        \begin{groupplot}[legend style = {font = \tiny}, group style = {group size = 2 by 2, horizontal sep = .5em, vertical sep = 1.5em, group name = robustness}, height=5cm, width=.55\textwidth,grid = both, grid style={line width=.1pt, draw=gray!30}, major grid style={line width=.2pt,draw=gray!50}, ylabel style = {yshift=-1.5ex}, xlabel style = {yshift=1.5ex}, ymin = 1, ymax = 10000, xtick = {0, 0.25, 0.5, 0.75, 1}, max space between ticks=20]
            \nextgroupplot[
            ylabel={ QPS (1/s) },
            ymode = log,
            yticklabel style={/pgf/number format/fixed,
                              /pgf/number format/precision=3},
            legend style = {opacity = 1, text opacity = 1, at = {(2, 1.25)}},
            legend columns = 7,
            ]
\addplot coordinates { (0.245080, 2350.690080) (0.305660, 2160.448635) (0.398840, 1704.567371) (0.417290, 1404.327207) (0.603390, 906.493635) (0.633380, 841.508923) (0.741730, 595.605713) (0.744790, 575.058806) (0.874340, 370.583320) (0.909580, 305.167711) (0.913460, 224.703682) (0.947470, 202.748761) (0.955420, 199.508916) (0.977050, 143.558736) (0.978540, 110.979025) (0.982110, 62.285339) (0.987150, 41.811500) (0.988230, 8.645277) };
\addlegendentry{PUFFINN};
\addplot coordinates { (0.030190, 32962.764066) (0.128310, 27428.164607) (0.237660, 22898.460610) (0.488230, 13199.251973) (0.556670, 8809.843708) (0.693740, 6960.284120) (0.888890, 2201.875900) (0.989490, 253.086339) };
\addlegendentry{ONNG};
\addplot coordinates { (0.349860, 9156.567604) (0.390860, 8879.500658) (0.572370, 5064.716525) (0.613340, 4508.790781) (0.658350, 4051.427864) (0.659210, 3223.624352) (0.701590, 3044.104445) (0.742210, 1852.938265) (0.839070, 1211.320844) (0.868060, 819.513557) (0.886270, 716.574852) (0.901130, 454.280790) (0.912090, 444.737851) (0.931000, 235.218103) (0.940940, 229.100772) (0.965130, 117.361608) };
\addlegendentry{IVF};
\addplot coordinates { (0.164430, 9330.493531) (0.218230, 7559.869322) (0.294580, 5562.916624) (0.317850, 3933.926293) (0.419090, 3233.220822) (0.449080, 2493.920903) (0.523540, 1972.671651) (0.556540, 1640.648394) (0.585020, 1219.677797) (0.627330, 1159.179043) (0.661450, 994.833496) (0.687910, 799.301168) (0.753170, 539.110681) (0.783710, 491.844345) (0.806310, 422.713988) (0.837030, 297.520420) (0.859490, 278.800130) (0.878550, 249.071256) (0.903680, 168.344650) (0.919640, 160.166432) (0.931890, 147.259579) (0.961560, 73.442605) (0.969240, 67.628424) (0.973730, 65.346084) (0.983730, 37.677872) (0.987010, 36.447871) (0.988510, 35.705482) (0.992820, 20.141788) (0.993700, 19.902796) (0.994240, 19.788184) };
\addlegendentry{ANNOY};
\addplot coordinates { (0.121420, 2730.761505) (0.217260, 854.536368) (0.311000, 462.578443) (0.404750, 272.549385) (0.472010, 184.489408) (0.563530, 127.099595) (0.652780, 87.962942) (0.768080, 59.651203) (0.817610, 46.123620) (0.855100, 38.712782) (0.917820, 30.565758) (0.943270, 27.049781) (0.975460, 22.427271) };
\addlegendentry{VPTree(nmslib)};
\addplot coordinates { (0.003810, 43975.268993) (0.007850, 25073.628982) (0.022730, 15771.416623) (0.043190, 11083.460323) (0.074800, 6316.824504) (0.081380, 5666.634650) (0.126470, 2564.269614) (0.170950, 2505.794941) (0.280080, 1338.614891) (0.429620, 690.128143) (0.600470, 354.388338) (0.696540, 237.373764) (0.721880, 217.380561) (0.746770, 191.110653) (0.790960, 160.233260) (0.831150, 130.348670) (0.851760, 112.355403) (0.866230, 98.124183) (0.884400, 90.262786) (0.897190, 66.920746) (0.925200, 58.423953) (0.948560, 50.674622) (0.955810, 45.285885) (0.964820, 41.197608) (0.971690, 40.087392) };
\addlegendentry{FALCONN};
            \addplot coordinates { (0.406390, 83.820759) (0.600470, 64.658111) (0.714960, 49.408357) (0.812810, 45.129181) (0.884680, 35.969848) (0.927640, 29.743923) };
\addlegendentry{FLANN};

    \nextgroupplot[
                ylabel = {},
                ymode=log,
                legend style = {opacity = 0.6, text opacity = 1},
                yticklabels = {, ,},
            ]
            \addplot coordinates { (0.358660, 270.986506) (0.365370, 260.913404) (0.460110, 239.351683) (0.472610, 206.107058) (0.484740, 180.440447) (0.499630, 168.957223) (0.558210, 160.804489) (0.562020, 141.096667) (0.643810, 119.941349) (0.714130, 94.886561) (0.714490, 77.700017) (0.751900, 77.419652) (0.761750, 73.514177) (0.824780, 62.219760) (0.864060, 45.074552) (0.887470, 38.902288) (0.892750, 32.274077) (0.902630, 27.849255) (0.912080, 26.005760) (0.944070, 20.798543) (0.956940, 18.458453) (0.972840, 15.260163) (0.975090, 12.734050) (0.979730, 9.955602) (0.983370, 9.648467) (0.985120, 8.517598) (0.988040, 7.896482) (0.989000, 6.461813) };
\addplot coordinates { (0.131490, 22968.303779) (0.236520, 20855.335263) (0.337440, 15045.664275) (0.645720, 7140.446306) (0.726610, 5201.759463) (0.894050, 2873.289267) (0.987180, 870.854586) (0.995450, 26.997083) };
\addplot coordinates { (0.565030, 2960.373800) (0.592570, 2020.101597) (0.618340, 1420.979473) (0.752390, 1296.812646) (0.774020, 1290.796422) (0.815010, 1031.502691) (0.832520, 796.742088) (0.840090, 563.384564) (0.900410, 348.324086) (0.907250, 201.895021) (0.922970, 193.349818) (0.941090, 115.758326) (0.949190, 64.096256) (0.959570, 38.001092) (0.967400, 18.436840) (0.969450, 13.133526) (0.984930, 9.613749) (0.996570, 4.936213) (0.998850, 4.443679) };
\addplot coordinates { (0.166900, 2335.247787) (0.178960, 1976.076091) (0.192740, 1643.835001) (0.218750, 1535.014975) (0.238350, 1310.856314) (0.294120, 987.557326) (0.324230, 905.686679) (0.367160, 671.112863) (0.408460, 664.383248) (0.502320, 436.807045) (0.543530, 288.652458) (0.626610, 234.592395) (0.671910, 171.971544) (0.712670, 137.590431) (0.757320, 109.611675) (0.775490, 81.461524) (0.829620, 67.887422) (0.883220, 33.108341) (0.909170, 18.612013) (0.938780, 10.430628) };
\addplot coordinates { (0.137410, 3760.043734) (0.217950, 370.379847) (0.345580, 84.013709) (0.499250, 35.257321) (0.639950, 18.531209) (0.773300, 10.480108) (0.868730, 7.062979) (0.934210, 5.152528) (0.963890, 4.590523) (0.980740, 4.407078) (0.993670, 4.120175) (0.996200, 4.080274) (0.998610, 3.647428) };
\addplot coordinates { (0.102080, 12529.069273) (0.103030, 11775.768727) (0.109540, 8260.154133) (0.120860, 2948.676704) (0.125300, 2849.823317) (0.141620, 1919.103602) (0.147400, 1485.017262) (0.170030, 715.109555) (0.193200, 684.689188) (0.248070, 368.028373) (0.338950, 159.491285) (0.451200, 73.763625) (0.472550, 65.166001) (0.551600, 58.954447) (0.581160, 51.720950) (0.621540, 43.588488) (0.667350, 34.956270) (0.715490, 33.383502) (0.741430, 28.012388) (0.769110, 17.520289) (0.814060, 13.750133) (0.854170, 13.173451) (0.875870, 12.829132) (0.902830, 11.053083) };
    \nextgroupplot[
                xlabel = {Recall},
                ylabel = {QPS (1/s)},
                ymode=log,
                yticklabel style={/pgf/number format/fixed,
                              /pgf/number format/precision=3},
                legend style = {opacity = 0.6, text opacity = 1},
            ]
            \addplot coordinates { (0.317550, 796.695312) (0.366750, 704.584591) (0.385430, 551.546248) (0.433050, 464.718085) (0.474750, 441.033640) (0.512210, 415.041057) (0.540720, 340.818162) (0.557800, 329.347392) (0.593640, 281.864355) (0.613120, 211.029767) (0.644860, 205.819089) (0.737490, 202.765856) (0.761490, 177.992666) (0.774980, 155.093396) (0.781090, 148.906942) (0.821520, 132.183930) (0.832530, 129.818111) (0.849110, 125.522810) (0.868190, 113.813982) (0.869630, 111.201873) (0.881460, 105.495611) (0.887320, 96.480588) (0.890180, 91.338161) (0.917370, 73.115608) (0.953280, 52.812854) (0.963990, 31.674110) (0.971600, 29.975617) (0.972870, 25.733120) (0.978730, 25.189247) (0.979230, 23.370514) (0.981380, 22.363807) (0.984630, 19.236179) (0.985990, 18.907295) };
\addplot coordinates { (0.179920, 16049.137250) (0.253440, 13091.592357) (0.371220, 10971.096598) (0.761010, 5702.031151) (0.841410, 5023.981193) (0.956900, 4504.044478) (0.991220, 1041.345437) (0.995750, 170.370199) };
            \addplot coordinates { (0.537270, 2660.996829) (0.579510, 1968.293189) (0.801900, 1258.983180) (0.857190, 917.348178) (0.877180, 808.319733) (0.897130, 474.684879) (0.914210, 243.798380) (0.971820, 154.996579) (0.977760, 106.226679) (0.982770, 90.499913) (0.986560, 50.067590) (0.987320, 29.255316) (0.990020, 29.207692) (0.992680, 15.272680) };
\addplot coordinates { (0.289280, 4016.837923) (0.318740, 3383.983332) (0.410930, 2686.684839) (0.434630, 2002.742914) (0.552950, 1676.551452) (0.579210, 1225.088316) (0.665200, 1063.616527) (0.691500, 776.773885) (0.710960, 660.655445) (0.763720, 620.856499) (0.786810, 507.927189) (0.801160, 453.261406) (0.860830, 311.909069) (0.876130, 267.386915) (0.884800, 250.808450) (0.908690, 178.238151) (0.920320, 170.204637) (0.926310, 157.829053) (0.949320, 101.093449) (0.954240, 96.227566) (0.968680, 47.484592) (0.974720, 46.685410) (0.979340, 26.761212) (0.982520, 25.683654) (0.987430, 14.795904) (0.988070, 14.287293) };
\addplot coordinates { (0.178430, 2661.590270) (0.220930, 1133.162904) (0.271360, 660.322342) (0.314980, 410.419409) (0.362870, 251.843414) (0.424470, 141.083393) (0.482700, 91.161525) (0.560880, 51.686727) (0.614410, 42.836844) (0.673420, 29.431715) (0.763900, 18.979686) (0.799680, 13.252016) (0.884200, 7.640512) };
\addplot coordinates { (0.103080, 15691.801881) (0.110450, 5344.252485) (0.120950, 3973.312731) (0.138730, 2382.918200) (0.162450, 1316.004218) (0.175320, 962.572735) (0.243500, 548.491913) (0.350070, 258.548466) (0.511200, 115.793667) (0.671630, 68.967083) (0.767140, 50.187023) (0.795280, 43.319448) (0.810480, 27.866077) (0.839490, 27.650102) (0.885670, 27.522016) (0.900520, 26.459888) (0.927890, 20.264033) (0.933320, 13.734845) (0.954310, 13.277595) (0.969670, 9.624735) (0.975660, 8.764131) (0.979300, 7.836705) (0.984010, 7.018094) };
    \nextgroupplot[
                xlabel = {Recall},
                ylabel = {},
                ymode=log,
                legend style = {opacity = 0.6, text opacity = 1},
                yticklabels = {, ,},
            ]
\addplot coordinates { (0.605000, 748.519862) (0.820000, 514.856424) (0.971000, 360.306134) (0.978000, 243.791433) (0.992000, 178.238772) (0.995000, 156.441904) (0.999000, 54.631678) };
\addplot coordinates { (0.019000, 87.964881) (0.940000, 6.867624) };
            \addplot coordinates { (0.001000, 4665.236314) (0.002000, 3390.229805) (0.003000, 2077.375346) (0.004000, 1504.388433) (0.005000, 1337.468555) (0.013000, 705.752269) (0.033000, 446.253319) (0.050000, 278.018691) (0.086000, 145.871374) (0.116000, 80.488105) (0.220000, 41.135089) };
\addplot coordinates { (0.001000, 3016.658064) (0.002000, 1664.555680) (0.004000, 970.893827) (0.005000, 577.424402) (0.006000, 398.194648) (0.010000, 259.250064) (0.012000, 240.236657) (0.014000, 198.830419) (0.021000, 146.101377) (0.027000, 123.370642) (0.044000, 80.196532) (0.111000, 35.120662) (0.208000, 18.751385) (0.210000, 18.564438) (0.355000, 10.540757) (0.366000, 10.481671) };
\addplot coordinates { (0.135000, 65.325733) (0.233000, 37.555880) (0.456000, 17.093406) (0.662000, 11.985236) (0.822000, 9.593139) (0.947000, 7.749562) (0.966000, 6.964156) (0.988000, 6.623116) (0.999000, 6.186428) };
\addplot coordinates { (0.011000, 25387.559165) (0.014000, 19165.024327) (0.027000, 12432.540783) (0.098000, 6482.274799) (0.159000, 3226.387692) (0.186000, 1447.871443) (0.210000, 1373.696642) (0.354000, 725.944586) (0.487000, 504.981400) (0.541000, 365.217842) (0.611000, 309.996201) (0.658000, 297.629403) (0.686000, 203.940614) (0.731000, 182.589342) (0.783000, 161.743512) (0.834000, 154.327609) (0.862000, 120.896137) (0.941000, 100.710643) (0.949000, 75.754348) };
\end{groupplot}
\end{tikzpicture}
\caption{ Comparison of different implementations with PUFFINN (index size at most 8 GB). Top left: \textsf{Glove-1M}, top right: \textsf{Glove-2M}, bottom left: \textsf{GNEWS-3M}, bottom right: \textsf{Synthetic}. }
    \label{plot:comparison}
\end{figure}
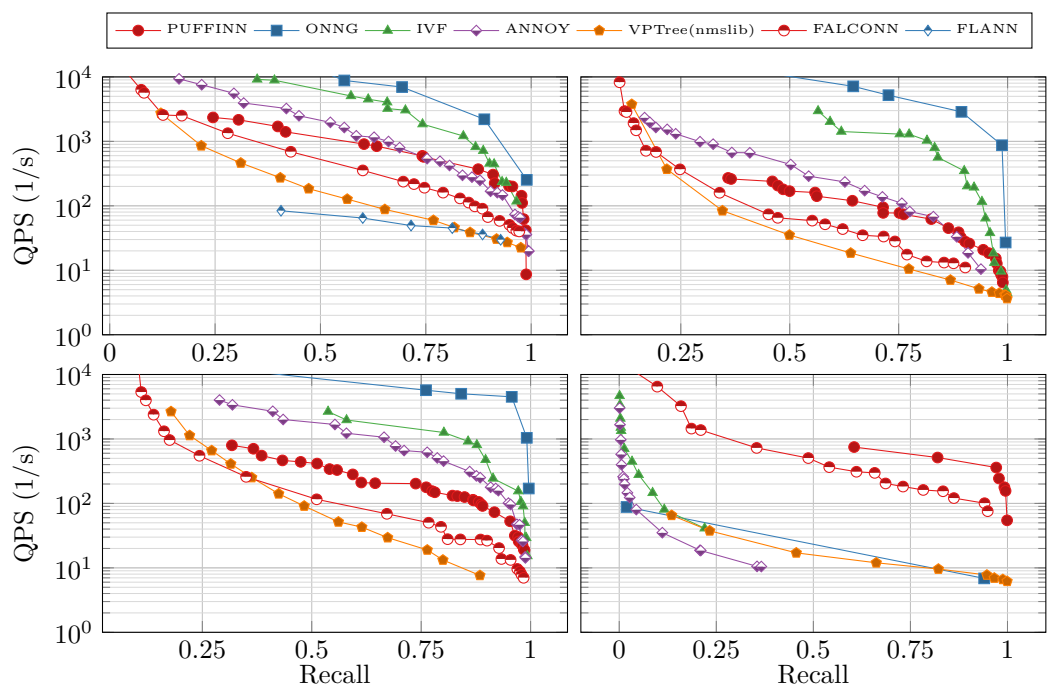

We turn our focus on comparing PUFFINN to other approaches on real-world and synthetic datasets (HL-Q2). 
Figure~\ref{plot:comparison} gives an overview of the performance-quality trade-off achieved by state-of-the-art $k$-NN approaches on three real-world dataset and a synthetic one.
Among the implementations that support automatic parameter tuning, PUFFINN is usually by a large factor the fastest implementation. 
We remark that the automatic parameter tuning of FLANN failed to build an index within 6 hours on three out of the four datasets and was disregarded on those.

PUFFINN managed to obtain at least recall .95 on every dataset, whereas the tuning of the VPTree failed to achieve high recall on \textsf{GNEWS-3M}.
PUFFINN shows better performance than FALCONN for most of the performance-quality space. It is comparable in performance to ANNOY on most of the real-world datasets, and is particular competitive in the high-recall setting.
On real-world datasets, IVF and in particular ONNG show better performance than PUFFINN except for very high recall.
This is an indicator that graph-based approaches perform best on these real-world datasets (for good manual parameter choices).

On the synthetic dataset, only LSH-based approaches achieve high recall at high QPS. 
VPTree and ONNG manage recall close to 1, but are more than a factor 10 slower than PUFFINN. IVF and ANNOY
fail to achieve recall higher than 40\% on the synthetic dataset.

To come back to our second high-level question: It is possible to compete with state-of-the-art implementations of $k$-NN using a parameterless method with guarantees. 
PUFFINN is easy to use and achieves performance comparable to many of its competitors. 
Among LSH variants, it is as fast as FALCONN which does not give guarantees.  This means that our engineering choices allowed an LSH implementation with theoretical guarantees that come ``for free''.


\bibliographystyle{plainurl}
\bibliography{lit}

\begin{thebibliography}{10}

\bibitem{andoni2015practical}
A.~Andoni, P.~Indyk, T.~Laarhoven, I.~Razenshteyn, and L.~Schmidt.
\newblock Practical and optimal {LSH} for angular distance.
\newblock In {\em Proc. {NIPS} '15}, pages 1225--1233, 2015.

\bibitem{andoni20052}
Alexandr Andoni and Piotr Indyk.
\newblock {E2LSH}, user manual.
\newblock 2005.

\bibitem{andoni2006efficient}
Alexandr Andoni and Piotr Indyk.
\newblock Efficient algorithms for substring near neighbor problem.
\newblock In {\em Proceedings of the seventeenth annual ACM-SIAM symposium on
  Discrete algorithm}, pages 1203--1212. Society for Industrial and Applied
  Mathematics, 2006.

\bibitem{AumullerBF17}
Martin Aum{\"{u}}ller, Erik Bernhardsson, and Alexander Faithfull.
\newblock {ANN-Benchmarks: {A} Benchmarking Tool for Approximate Nearest
  Neighbor Algorithms}.
\newblock In {\em {SISAP}'17}, pages 34--49, 2017.
\newblock \href {http://dx.doi.org/10.1007/978-3-319-68474-1\_3}
  {\path{doi:10.1007/978-3-319-68474-1\_3}}.

\bibitem{Bawa05}
Mayank Bawa, Tyson Condie, and Prasanna Ganesan.
\newblock {LSH} forest: self-tuning indexes for similarity search.
\newblock In {\em Proceedings of 14th International Conference on World Wide
  Web (WWW)}, pages 651--660. ACM, 2005.

\bibitem{kdtree}
Jon~Louis Bentley.
\newblock Multidimensional binary search trees used for associative searching.
\newblock {\em Commun. ACM}, 18(9):509--517, 1975.

\bibitem{annoy}
Erik Bernhardsson.
\newblock Annoy.
\newblock URL: \url{https://github.com/spotify/annoy}.

\bibitem{nmslib}
Leonid Boytsov and Bilegsaikhan Naidan.
\newblock Engineering efficient and effective non-metric space library.
\newblock In {\em {SISAP}'13}, pages 280--293.

\bibitem{bro97b}
Andrei~Z. Broder.
\newblock On the resemblance and containment of documents.
\newblock In {\em Compression and Complexity of Sequences 1997. Proceedings},
  pages 21--29. IEEE, 1997.

\bibitem{Charikar02}
Moses Charikar.
\newblock Similarity estimation techniques from rounding algorithms.
\newblock In {\em Proceedings of 34th {ACM} Symposium on Theory of Computing
  (STOC)}, pages 380--388, 2002.

\bibitem{Christiani17}
Tobias Christiani.
\newblock Fast locality-sensitive hashing for approximate near neighbor search.
\newblock {\em CoRR}, abs/1708.07586, 2017.
\newblock URL: \url{http://arxiv.org/abs/1708.07586}, \href
  {http://arxiv.org/abs/1708.07586} {\path{arXiv:1708.07586}}.

\bibitem{christiani2017framework}
Tobias Christiani.
\newblock A framework for similarity search with space-time tradeoffs using
  locality-sensitive filtering.
\newblock In {\em Proc. 28th Symposium on Discrete Algorithms (SODA)}, pages
  31--46, 2017.

\bibitem{christiani2018confirmation}
Tobias Christiani, Rasmus Pagh, and Mikkel Thorup.
\newblock Confirmation sampling for exact nearest neighbor search.
\newblock {\em CoRR}, abs/1812.02603, 2018.
\newblock URL: \url{http://arxiv.org/abs/1812.02603}, \href
  {http://arxiv.org/abs/1812.02603} {\path{arXiv:1812.02603}}.

\bibitem{CiacciaPZ97}
Paolo Ciaccia, Marco Patella, and Pavel Zezula.
\newblock M-tree: An efficient access method for similarity search in metric
  spaces.
\newblock In {\em {VLDB}'97}, pages 426--435, 1997.

\bibitem{Dahlgaard2017FastSS}
S{\o}ren Dahlgaard, Mathias B{\ae}k~Tejs Knudsen, and Mikkel Thorup.
\newblock Fast similarity sketching.
\newblock {\em 2017 IEEE 58th Annual Symposium on Foundations of Computer
  Science (FOCS)}, pages 663--671, 2017.

\bibitem{Dasgupta08}
Sanjoy Dasgupta and Yoav Freund.
\newblock Random projection trees and low dimensional manifolds.
\newblock In {\em STOC'08}, pages 537--546. ACM.

\bibitem{lshkit}
Wei Dong.
\newblock {LSH}kit.
\newblock URL: \url{http://lshkit.sourceforge.net/}.

\bibitem{Dong08}
Wei Dong, Zhe Wang, William Josephson, Moses Charikar, and Kai Li.
\newblock Modeling {LSH} for performance tuning.
\newblock In {\em Proceedings of 17th {ACM} Conference on Information and
  Knowledge Management (CIKM)}, pages 669--678. ACM, 2008.

\bibitem{HarPeledIM12}
Sariel Har{-}Peled, Piotr Indyk, and Rajeev Motwani.
\newblock Approximate nearest neighbor: Towards removing the curse of
  dimensionality.
\newblock {\em Theory of Computing}, 8(1):321--350, 2012.

\bibitem{IndykM98}
Piotr Indyk and Rajeev Motwani.
\newblock Approximate nearest neighbors: {T}owards removing the curse of
  dimensionality.
\newblock In {\em Proceedings of 30th Annual {ACM} Symposium on the Theory of
  Computing (STOC)}, pages 604--613, 1998.

\bibitem{Iwasaki18}
M.~{Iwasaki} and D.~{Miyazaki}.
\newblock {Optimization of Indexing Based on k-Nearest Neighbor Graph for
  Proximity Search in High-dimensional Data}.
\newblock {\em ArXiv e-prints}, October 2018.
\newblock \href {http://arxiv.org/abs/1810.07355} {\path{arXiv:1810.07355}}.

\bibitem{faiss}
Jeff Johnson, Matthijs Douze, and Herv{\'{e}} J{\'{e}}gou.
\newblock Billion-scale similarity search with gpus.
\newblock {\em CoRR}, abs/1702.08734, 2017.

\bibitem{li2011theory}
Ping Li and Arnd~Christian K{\"{o}}nig.
\newblock Theory and applications of \emph{b}-bit minwise hashing.
\newblock {\em Commun. {ACM}}, 54(8):101--109, 2011.
\newblock URL: \url{https://doi.org/10.1145/1978542.1978566}, \href
  {http://dx.doi.org/10.1145/1978542.1978566}
  {\path{doi:10.1145/1978542.1978566}}.

\bibitem{hnsw}
Y.~A. {Malkov} and D.~A. {Yashunin}.
\newblock {Efficient and robust approximate nearest neighbor search using
  Hierarchical Navigable Small World graphs}.
\newblock {\em ArXiv e-prints}, March 2016.
\newblock \href {http://arxiv.org/abs/1603.09320} {\path{arXiv:1603.09320}}.

\bibitem{word2vec}
Tomas Mikolov, Kai Chen, Greg Corrado, and Jeffrey Dean.
\newblock Efficient estimation of word representations in vector space.
\newblock {\em CoRR}, abs/1301.3781, 2013.
\newblock \href {http://arxiv.org/abs/1301.3781} {\path{arXiv:1301.3781}}.

\bibitem{flann}
Marius Muja and David~G. Lowe.
\newblock Fast approximate nearest neighbors with automatic algorithm
  configuration.
\newblock In {\em {VISSAPP}'09}, pages 331--340. INSTICC Press.

\bibitem{Pagh19}
Rasmus Pagh.
\newblock Similarity sketching.
\newblock In {\em Encyclopedia of Big Data Technologies.} 2019.

\bibitem{pennington2014glove}
Jeffrey Pennington, Richard Socher, and Christopher~D. Manning.
\newblock Glove: Global vectors for word representation.
\newblock In {\em Empirical Methods in Natural Language Processing (EMNLP)},
  pages 1532--1543, 2014.

\bibitem{rahimi2008random}
Ali Rahimi and Benjamin Recht.
\newblock Random features for large-scale kernel machines.
\newblock In {\em Advances in neural information processing systems}, pages
  1177--1184, 2008.

\bibitem{satuluri2012bayesian}
Venu Satuluri and Srinivasan Parthasarathy.
\newblock Bayesian locality sensitive hashing for fast similarity search.
\newblock {\em Proceedings of the VLDB Endowment}, 5(5):430--441, 2012.

\bibitem{terasawa2007spherical}
Kengo Terasawa and Yuzuru Tanaka.
\newblock Spherical {LSH} for approximate nearest neighbor search on unit
  hypersphere.
\newblock In {\em Algorithms and Data Structures, 10th International Workshop,
  {WADS} 2007, Halifax, Canada, August 15-17, 2007, Proceedings}, pages 27--38,
  2007.
\newblock URL: \url{https://doi.org/10.1007/978-3-540-73951-7\_4}, \href
  {http://dx.doi.org/10.1007/978-3-540-73951-7\_4}
  {\path{doi:10.1007/978-3-540-73951-7\_4}}.

\bibitem{Yianilos93}
Peter~N. Yianilos.
\newblock Data structures and algorithms for nearest neighbor search in general
  metric spaces.
\newblock In {\em Proceedings of the Fourth Annual {ACM/SIGACT-SIAM} Symposium
  on Discrete Algorithms, 25-27 January 1993, Austin, Texas, {USA.}}, pages
  311--321, 1993.
\newblock URL: \url{http://dl.acm.org/citation.cfm?id=313559.313789}.

\bibitem{ZezulaSAR98}
Pavel Zezula, Pasquale Savino, Giuseppe Amato, and Fausto Rabitti.
\newblock Approximate similarity retrieval with {M}-{T}rees.
\newblock {\em {VLDB} J.}, 7(4):275--293, 1998.

\end{thebibliography}

\conf{}
{
\appendix
\section{The LSH Framework}
\label{app:lsh}
Having access to an LSH family $\HH$ as defined in Section~\ref{sec:prelim} allows us to build a data structure with the following properties solving the $(c, r)$-near neighbor problem.

\begin{theorem}[{\cite[Theorem 3.4]{HarPeledIM12}}]
   Suppose that for some metric space $(X, \dist)$ and some factor $c > 1$, there exists an LSH family such that
   $\Pr[h(q)=h(x)] \ge p_1$ when $\text{dist}(q,x)\le r$ and
   $\Pr[h(q)=h(x)] \le p_2$ when $\text{dist}(q,x)\ge cr$ with $p_1 > p_2$.
   Let $S \subseteq X$ be a dataset containing $n$ points. 
   Then there exists a data structure for $S$ such that for a given query~$q$, it returns with constant probability a point within distance $cr$ in $S$, if there exists a point within distance~$r$ in $S$.
   The algorithm uses $O(d n + n^{1 + \rho})$ space and evaluates $O(n^\rho)$ hash functions per query, where $\rho = \frac{\log (1/p_1)}{\log (1/p_2)}$.
   \label{thm:lsh}
\end{theorem}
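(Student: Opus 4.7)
The plan is to prove the classical result in the standard amplify–then–repeat manner. First I would form the amplified hash function $g(x) = (h_1(x),\ldots,h_K(x))$, where the $h_i$ are independent samples from $\HH$, and build $L$ independent hash tables $T_1,\ldots,T_L$ using independent copies $g_1,\ldots,g_L$. The parameters will be set to
\[
K = \left\lceil \frac{\log n}{\log(1/p_2)} \right\rceil, \qquad L = \left\lceil n^{\rho} \right\rceil,
\]
so that $p_2^K \le 1/n$ and $p_1^K \ge n^{-\rho}$ by the definition of $\rho$. The query procedure on input $q$ scans the buckets $T_j[g_j(q)]$ for $j = 1,\ldots,L$, computes the distance from $q$ to each candidate encountered, and returns the first one within distance $cr$; to preserve the query-time bound it aborts after examining more than, say, $3L$ candidates in total.

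The correctness analysis decomposes the failure probability into two events. Let $x^\ast \in S$ be a witness point with $\dist(q,x^\ast) \le r$ (assumed to exist). Event $E_1$ is that $g_j(q) \ne g_j(x^\ast)$ for every $j$; since each collision occurs independently with probability at least $p_1^K \ge n^{-\rho} \ge 1/L$, we get $\Pr[E_1] \le (1-1/L)^L \le 1/e$. Event $E_2$ is that the total number of candidates at distance greater than $cr$ encountered across the $L$ buckets exceeds $3L$; for a single such point and a single table, the probability of being hashed to $g_j(q)$ is at most $p_2^K \le 1/n$, so the expected number of ``far'' collisions summed over all tables is at most $n \cdot L \cdot (1/n) = L$, and Markov's inequality bounds $\Pr[E_2] \le 1/3$. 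By a union bound both events fail with constant probability, and whenever both succeed the algorithm encounters $x^\ast$ (or some other point within distance $cr$) before aborting and reports it.

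For the resource bounds: each table stores $n$ pointers plus $O(d n)$ for the raw points (shared across tables), giving $O(dn + nL) = O(dn + n^{1+\rho})$ total space. A query evaluates $K$ hash functions in each of $L$ tables, which is $O(KL) = O(n^\rho \log n)$ hash evaluations; absorbing the $\log n$ factor into $n^\rho$ (as is standard) or keeping it explicit gives the stated $O(n^\rho)$ bound. Distance computations are capped at $3L = O(n^\rho)$ by the early-termination rule, each costing $O(d)$ time.

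The main subtlety is ensuring that both failure events can be controlled simultaneously: without truncation the ``far'' collisions could in principle be unbounded, wrecking the query time, while truncating too aggressively could also throw away the near neighbor. The choice $K = \log n / \log(1/p_2)$ is precisely what balances these two concerns, turning $p_2^K \le 1/n$ into the exact bound on expected far collisions while keeping $p_1^K$ large enough that $L = n^\rho$ repetitions suffice. Once this balance is in place the remaining arguments are routine union bounds and Markov's inequality; standard probability boosting (running $O(1)$ independent copies of the scheme and returning any output) converts the constant success probability into any desired $1-\delta$ at the cost of a multiplicative $\log(1/\delta)$ factor.
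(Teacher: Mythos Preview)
Your proposal is correct and follows essentially the same approach as the paper: the same $L$-table concatenation scheme with $K=\lceil\log n/\log(1/p_2)\rceil$, the same $3L$ early-termination rule, and the same two-event analysis (missing the near point vs.\ seeing too many far points). If anything, you are more explicit than the paper's sketch, which mentions the $3L$ cutoff and the choice of $k,L$ but does not spell out the Markov argument for the far-collision event or the union bound; the only minor slip is that $p_1^K \ge p_1\cdot n^{-\rho}$ rather than $n^{-\rho}$ due to the ceiling, which costs a constant factor in $L$ and is immaterial to the $O(n^\rho)$ bounds.
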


\begin{proof}
Given access to an LSH family $\HH$ with the properties stated in the theorem
and two parameters~$L$ and $k$ (to be specified below), repeat the following process independently for
each $i$ in $\{1, \ldots, L\}$:
Choose $k$ hash functions $g_{i, 1},\ldots,g_{i, k}$ independently at random from~$\HH$.
For each point $p \in S$, we view the
sequence $h_i(p) = (g_{i, 1}(p),\ldots, g_{i, k}(p)) \in R^k$ as the hash code of $p$,
identify this hash code with a bucket in a table, and store a reference to $p$
in bucket $h_i(p)$.
To avoid storing empty buckets from $R^k$, we resort to
hashing and build a hash table $T_i$ to store the non-empty buckets for $S$ and $h_i$.

Given a query $q \in X$, we retrieve all points from the buckets $h_1(q),
\ldots, h_L(q)$ in tables $T_1, \ldots, T_L$, respectively, and report
a close point in distance at most $cr$ as soon as we find such a point.
Note
that the algorithm stops and reports that no close points exists after
retrieving more than $3L$ points, which is crucial to guarantee query time
$O(n^\rho)$.

The parameters $k$ and $L$ are set according to the following reasoning.
First,
set $k$ such that it is expected that at most one distant point at distance at
least $cr$ collides with the query in one of the repetitions.
This means that
we require $n
p_2^k \leq 1$ and hence we define $k = \lceil \frac{\log n}{\log(1/p_2)} \rceil$.
To find a close point at distance at most $r$ with probability at
least $1 - \delta$, the number of repetitions $L$ must satisfy
$\delta \leq (1 - p_1^k)^L \leq \exp(-p_1^k \cdot L)$.
This means that $L$
should be at least $p_1^{-k} \ln \delta$ and simplifying yields $L = O(n^\rho)$.
Note that these parameters are set to work even in a worst-case scenario where there is
exactly one point at distance $r$ and all other points have distance slightly larger 
than $cr$.
\end{proof}

\section{Missing Details of Data Structure and Its Theoretical Properties}
\label{app:ds}

\subsection{Discussion and Proof of Lemma~\ref{lem:knn_time}}

Lemma \ref{lem:knn_time} states that with probability at least $1 - \delta$ the expected running time is bounded by the time it would take using the natural algorithm to find the entire $k$NN set with probability at least $\delta$ in addition to the time it takes to traverse the $L$ LSH tries bottom up and to report $k$ points from each.
The bound is very conservative since the guarantees have to cover adverse instances.
Consider for example the case where there is a large gap in the collision probability between the $k$th and the $k+1$st nearest neighbor.
In this case the algorithm would have to find the entire $k$NN set before the stopping criteria matches that of the natural algorithm. 
We could also have the problem that the $k-1$ nearest neighbors all collide with the query with probability 1 so that the query algorithm encounters all of these points in every LSH trie.
\begin{proof}
	Let $i'$ denote the largest $i$ such that $j' = \lceil \ln(k/\delta)/p(q, x_k)^i \rceil \geq L$.
	If we denote by $i^*$ the choice of level underlying $OPT(L, K, k, \delta/k)$ then we have that $i' \geq i^*$.
	We will proceed by bounding the probability that \textsc{adaptive-kNN}$(q, k, \delta)$ proceeds beyond step $(i', j')$. 
	The probability of having missed a point in the true top-$k$ at step $(i', j')$ is at most $\delta/k$ and by a union bound the probability of not having found all of them is at most $\delta$. 
	Furthermore, in the event of having found the true top-$k$ at step $(i', j')$ the stopping criteria takes effect.
	Since $\delta \leq 1/2$ the expected running time conditioned on stopping no later than at step $(i', j')$ is no more than a constant times the expected running time of searching to step $(i' , j')$.
	The expected running time of searching $j'$ trees at depth $i'$ can be upper bounded as follows:
	\begin{align*}
		&\frac{\ln(\delta/k)}{p(q, x_k)^{i'}}(i' + \sum_{s = 1}^{k} p(q, x_s)^{i'} + \sum_{s = k+1}^{n} p(q, x_s)^{i'})  
		\leq L(K + k) + \frac{\ln(\delta/k)}{p(q, x_k)^{i'}} \sum_{s = k+1}^{n} p(q, x_s)^{i'} \\
		&\quad \leq L(K + k) + \frac{\ln(\delta/k)}{p(q, x_k)^{i^*}} \sum_{s = k+1}^{n} p(q, x_s)^{i^*} \leq L(K +k) + OPT(L, K, k, \delta/k).
	\end{align*}
\end{proof}

\subsection{Reducing the number of LSH evaluations}
Many locality-sensitive hash functions have an evaluation time that is at least linear in the description size of the input.
To reduce the contribution from LSH evaluations to the query time we can reuse a single LSH function in multiple LSH tries so we only have to pay once to evaluate it. 
The LSH literature contains two different techniques to achieve this goal: tensoring~\cite{andoni2006efficient} and pooling~\cite{Dahlgaard2017FastSS}.
See~\cite{Christiani17} for a more detailed description in relation to the approximate near neighbor problem.
We proceed by describing how tensoring and pooling can be used in the context of solving the $k$NN problem using the approach of Algorithm~\ref{alg:adaptiveknn}.
Tensoring has the desirable property that it does not introduce additional parameters to the algorithm, whereas pooling requires specifying the pool size $m$ in advance.

\subparagraph*{Tensoring}
The variant of tensoring described here will reduce the number of LSH functions evaluated from $LK$ to $\sqrt{L}K$. 
The cost of using fewer independent LSH evaluations is that we need to perform more lookups and distance computations in order to achieve the same correctness guarantees.
However, for constant failure probability, say, $\delta = 1/2$, we only have to increase the number of buckets that we search by a small constant factor compared to using independent LSH functions.

Assume that $L$ is an even power of two and that $K$ is an even positive integer. 
We form two collections that each hold $\sqrt{L}$ tuples of $K/2$ independent LSH functions.
Denote the $s$th hashfunction in the $t$th tuple of the first and second collection by $h_{s,t}$ and $h^{'}_{s,t}$, respectively.
Each of the $L$ LSH tries is now indexed by $j_1, j_2 \in \{1, \dots, \sqrt{L}\}$ which denotes the tuple of LSH functions used from the first and second collection, respectively.
For $i \in \{0, 2, 4, \dots, K \}$ the set of points in the bucket associated with a query point $q$ at depth $i$ in the LSH trie indexed by $j_1, j_2$ is given by:
\begin{align*}
	S_{i,j_1, j_2}(q) &= \{ x \in P \mid h_{1,j_1}(x) = h_{1,j_1}(q) \land \dots \land h_{i/2,j_1}(x) = h_{i/2,j_1}(q) \\ 
	&\qquad \land h^{'}_{1,j_2}(x) = h^{'}_{1,j_2}(q) \land \dots \land h^{'}_{i/2,j_2}(x) = h^{'}_{i/2, j_2}(q) \}. 
\end{align*}
The query algorithm under the tensoring scheme (Algorithm \ref{alg:knn_tensoring}) proceeds by searching through the LSH tries at depth $i = K, K-2, \dots, 0$.
At each depth $i$ for $m = 1, \dots, \sqrt{L}$ we gradually search the $m^2$ buckets that are formed by pairing up the first $i/2$ hash functions from the first $m$ tuples from the two collections.

We now consider the stopping criteria.
The probability that the query point does not collide with the $k$NN in the first $i/2$ hash functions of the first $m$ tuples in either collection is given by $1 - p(q, x_k)^{i/2} \leq 1 - p(q, x_k')^{i/2}$.
The algorithm fails to find the $k$NN if there is not at least one such collision in each collection and the stopping criteria follows from a union bound.
\begin{algorithm}
\SetKw{or}{ or }
\DontPrintSemicolon
Initialize PQ with a capacity of k unique keys \; 
\For{$i \gets K, K - 2, K - 4, \dots, 0$}{
	\For{$m \gets 1, 2, \dots, \sqrt{L}$}{
		\For{$j \gets 1, 2, \dots, m$} {
			\lFor{$x \in \Pi_{i,j,m}(q)$}{PQ.insert$(x, \dist(q, x))$}
			\lFor{$x \in \Pi_{i,m,j}(q)$}{PQ.insert$(x, \dist(q, x))$}
		}
		\lIf{$i = 0 \or 2(1 - p(q, x'_k)^{i/2})^m \leq \delta$}{\Return PQ}
	}
}
\caption{\textsc{adaptive-kNN-tensoring}$(q, k, \delta)$} \label{alg:knn_tensoring}
\end{algorithm}
\subparagraph*{Pooling}
Another approach to reduce the number of independent LSH functions that have to be stored and evaluated is to form a pool of $m$ LSH functions.
The $K$ hash functions used for each of the $L$ tries are selected independently as random subsets of size $K$ from the pool of hash functions. 
In algorithm \ref{alg:knn_pooling} we let $h_{i,j}$ denote the $i$th hash function from the $j$th random subset from the pool of hash functions.
By adapting the analysis from~\cite{Christiani17} we obtain a failure probability of $\delta \leq 1/2$ by setting the stopping criteria to ensure two things:
First, the expected number of times that the $k$-nearest neighbor has been found should be at least two. 
And secondly, the pool size $m$ should be large enough to ensure that the variance in the number of collisions between the query point and the $k$NN is sufficiently low for Cantelli's inequality to give the desired failure probability. 
\begin{algorithm}
\SetKw{or}{ or }
\SetKw{and}{ and }
\DontPrintSemicolon
Initialize PQ with a capacity of k unique keys \; 
\For{$i \gets K, K - 1, \dots, 0$}{
	\For{$j \gets 1, 2, \dots, L$}{
		\lFor{$x \in \Pi_{i,j}(q)$}{PQ.insert$(x, \dist(q, x))$}
		\lIf{$i = 0 \or j \cdot p(q, x'_k)^i \geq 2 \and m \geq 5i^{2} / p(q, x'_k)$}{\Return PQ}
	}
}
\caption{\textsc{adaptive-kNN-pooling}$(q, k)$} \label{alg:knn_pooling}
\end{algorithm}
For completeness we include the proof of correctness that follows the approach of Christiani~\cite{Christiani17}.
\begin{proof}
Cantelli's inequality states that for a non-negative random variable $X$ with mean $\mu$ and variance $\sigma^2$ we have that $\Pr[X \leq 0] \leq \sigma^2 / (\mu^2 + \sigma^2)$.
Let $X$ denote the number of collisions between the query point and the $k$NN $x_k$ when searching $j$ tries at depth $i$ under the pooling scheme.
We have that $X = \sum_{t = 1}^{j} X_t$ is the sum of collisions over the $j$ tries and that $X_t = \Pi_{s = 1}^{i}1\{h_{s,t}(q) = h_{s,t}(x_k)\}$ is an indicator variable for whether a set of $i$ random hash functions from the pool all produce a collision between $q$ and $x_k$.
By linearity of expectation and because the LSH functions $h_{s,t}$ used for a specific $X_t$ are independent we have that $\mu = \E[X] = jp(q, x_k)^i$.
To bound the variance $\sigma^2 = \E[X^2] - \mu^2$ we proceed by upper bounding $\E[X^2] \leq \mu + j^2 \E[X_1 X_2]$.
We will use the principle of deferred decisions to upper bound $\E[X_1 X_2] = \E[\Pi_{s = 1}^{i}1\{h_{s,1}(q) = h_{s,1}(x_k)\}\Pi_{s = 1}^{i}1\{h_{s,2}(q) = h_{s,2}(x_k)\}]$. 
First a set of $i$ indices for random hash functions from the pool are sampled for $X_1$.
Secondly, each hash function sampled for $X_2$ either collides with a hash function sampled for $X_1$, or a new random hash function is sampled.
The first event happens with probability at most $i/(m - i)$, resulting in the bound $\E[X_1 X_2] \leq p(q, x_k)^i (i/(m-i) + (1-i/(m-i)) p(q, x_k))^i$.
If we assume that $m \geq 2i$ we can further rewrite the upper bound as $\E[X_1 X_2] \leq p^{2i} \exp(2i/mp)$.
We obtain a failure probability $\delta \leq 1/2$ from Cantelli's inequality by setting $\mu \geq 2$ and $m \geq 5i^{2} / p(q, x_k)$ which we use as stopping criteria in Algorithm~\ref{alg:knn_pooling}.
\end{proof}

\section{Choice of Segment Size $B$}
\label{app:segment:size}
\begin{table}[tbhp]
\centering
\begin{tabular}{llllll}
\toprule
$B$ & 1 & 4 & 8 & 12 & 16\\
\midrule
Avg. Recall & 0.936 & 0.922 & 0.929 & 0.932 & 0.932\\
QPS & 35.0 & 39.7 & 38.7 & 39.4 & 34.6\\
\bottomrule
\end{tabular}
\caption{\label{tab:segment_size}Query performance using different choices of $B$}
\end{table}
When retrieving additional candidates by reducing the length of the prefix, we retrieve candidates in segments of size $B$, even if only the first candidate's prefix matches. This reduces the number of random memory accesses, which improves the running time even though the number of candidates increases. Some measurements for \textsf{Glove-1M}, 2 GB index size, and recall .95 are shown in Table~\ref{tab:segment_size} for different values of $B$, where $B = 1$ is the case where only points with a matching prefix are considered. Based on these results, we chose to use $B = 12$.

}

\end{document}